\newtheorem{proposition}{Proposition}
\newtheorem{problem}{Problem}
\newtheorem{remark}{Remark}
\newtheorem{theorem}{Theorem}
\newtheorem{lemma}{Lemma}
\newtheorem{assumption}{Assumption}
\newcommand{\etal}{\emph{et al.}}
\newcommand\copyrighttext{%
  \footnotesize © 2025 IEEE.  Personal use of this material is permitted. Permission from IEEE must be obtained for all other uses, in any current or future media, including reprinting/republishing this material for advertising or promotional purposes, creating new collective works, for resale or redistribution to servers or lists, or reuse of any copyrighted component of this work in other works.}
\newcommand\copyrightnotice{%
\begin{tikzpicture}[overlay, remember picture]
\node[anchor=south,yshift=10pt] at (current page.south) {\fbox{\parbox{\dimexpr\textwidth-\fboxsep-\fboxrule\relax}{\copyrighttext}}};
\end{tikzpicture}%
}
\title{\LARGE \bf
On dual-rate consensus under transmission delays
}
\author{David Umsonst$^{\dagger}$ and Mina Ferizbegovic$^{\dagger}$ 
\thanks{\hspace{-1.05em}\footnotesize \textsuperscript{$\dagger$} Ericsson Research, Stockholm, Sweden.\newline
        {\tt\scriptsize\{david.umsonst, mina.ferizbegovic\}@ericsson.com}}%
}
\begin{document}

\maketitle
\thispagestyle{empty}
\pagestyle{empty}

\begin{abstract}
In this paper, we investigate the problem of dual-rate consensus under transmission delays, where the control updates happen at a faster rate than the measurements being received. 
We assume that the measurements are delayed by a fixed delay and show that for all delays and rates, the system reaches a consensus if and only if the communication graph of the agents is connected and the control gain is chosen in a specific interval.
Based on these results we dive deeper into the convergence properties and investigate how the convergence changes when we change the rate for sending measurements. 
We observe that in certain cases there exists a sweet spot for choosing the sampling rate of the measurements, which can improve the convergence to the consensus point. 
We then formulate an optimization problem to find a sampling rate to improve the convergence speed and provide a necessary and sufficient condition for the existence of a finite optimizer of this problem. 
Our results are verified with numerical simulations.
\end{abstract}
\copyrightnotice
\section{Introduction}
Consensus has been extensively researched in the past decades and several results on the convergence of consensus algorithms have been proved \cite{ConsensusSurvey}.
Both event-triggered and time-triggered control methods have been employed to solve the consensus problem. In event-triggered control communication only happens if the predefined triggering condition is met, whereas in time-triggered control communication happens periodically. The differences of event-triggered and time-triggered control have been widely studied in \cite{astrom2002comparison, antunes2019consistent, mazenc2021event}.

When performing multi-agent control over a wireless network, such as 5G, one needs to take several network aspects into account. For example, event-triggered consensus strategies, such as \cite{seyboth2013event, nowzari2019event}, can be used to reduce the bandwidth requirements, although recent work has shown that time-triggered strategies could have a better performance\cite{TimeVsEventTriggeredConsensus}.
Network delays can also influence the performance of the consensus controller. Therefore, Atay~\cite{Atay} has investigated a consensus problem under transmission delays, where the agents have access to their own state but the state information of the other agents is delayed. Fan \etal~\cite{PRConsensusControl} look into the problem when the state information for the controller is delayed and propose a controller that introduces an additional delay to improve the convergence speed to the consensus value.
Ballotta and Gupta~\cite{ConsensusSparseControl} investigate whether or not sparse controllers can lead to a faster convergence of time-delayed consensus problem, where the time delay depends on the number of hops between the agents.

In the context of consensus, multi-rate control algorithms have not been studied extensively, although they are able to improve performance \cite{MultiRateControl} and can also reduce the required bandwidth by sending measurements with different rates.
An example application is the consensus of robots equipped with sensors that have different rates such as cameras and IMUs.

Thus, in this paper, we will look at dual-rate consensus, where the sampling rate of the sensors is lower than the sampling rate of the controller, where there is also a delay when sending the sensor measurements between the agents.
To the best of our knowledge, dual-rate consensus has not been intensively investigated compared to both single-rate time-triggered and event-triggered methods.
Furusaka \etal \cite{DualRateConsensusQuantizedCommunication} have a similar setup, where the measurements are sent with a lower rate than the controller, but their work looks into the design of a quantized dual-rate consensus controller and does not consider transmission delays.
Li \emph{et al.}~\cite{MultiRateControlWithDelayForConsensus} consider the control design of a delayed multi-rate system with an application to consensus. 
They aim to design a state-feedback matrix to stabilize the system for a given sampling time. Compared to our work it does not investigate the influence of the sampling time on the convergence.
The work closest to ours comes from Tanaka \etal \cite{DualRateConsensusJapanesePaper}, which investigates the stability of a dual-rate consensus problem. However, the results for stability are only sufficient conditions and neither delay nor the choice of a sampling rate are considered.

The contributions of our paper are threefold. First, we give a necessary and sufficient condition for the investigated dual-rate system to converge to a consensus, which only depends on the properties of the communication graph. 
Second, we analyze how the modes of the closed-loop system, unrelated to the consensus value, decay for different sampling periods of the measurements. 
Third, we propose an optimization problem to determine a sampling period that improves the convergence speed to the consensus point and give a necessary and sufficient condition when a finite optimizer exists.

The remainder of the paper is organized as follows. After introducing the notation at the end of this section, Section~\ref{sec:ProblemFormulation} introduces the system considered and states the problems, we investigate in this paper. Section~\ref{sec:AnalysingTheStability} looks at conditions for when a consensus can be achieved, while Section~\ref{sec:ConvergenceSpeed} investigates how the sampling period of the measurements influences the speed of the convergence. 
A numerical evaluation of our theoretical results is presented in Section~\ref{sec:NumericalExamples} and the paper is concluded in Section~\ref{sec:Conclusion}.

\textit{Notation:} Let $\mathbb{R}_{\geq a}$ and $\mathbb{N}_{\geq a}$ denote the real and integer numbers larger than $a$. For $x\in\mathbb{R}$, $\lceil x \rceil$, $\lfloor x \rfloor$, and $|x|$ are the smallest integer larger than or equal to, the largest integer smaller than or equal to, and the absolute value of $x$, respectively. 
Let $A$ and $B$ be sets then their union is denoted by $A\cup B$ and the number of elements in $A$ by $\#A$. Let $\mathtt{diag}(x_1,\ldots,x_n)$ the $n$-dimensional diagonal matrix with the $i$th diagonal element being $x_i\in\mathbb{R}$.
\section{Problem Formulation}
\label{sec:ProblemFormulation}
In this section, we will introduce the concept of dual-rate consensus and formulate the problem we aim to solve.

In consensus, the agents share their states with their neighbors and update their control input based on received states. 
In the single-rate case, the measurements are sent with the same rate as the control inputs are updated.
In the considered dual-rate case, the rate with which the measurements are sent is lower than the rate with which the control inputs are updated. 
An agent uses its own state and the last received states from its neighbors to determine a new control input.

We consider $n$ single-integrator agents discretized with a sampling period $T>0$
\begin{align}
    \label{eq:SingleIntegratorAgents}
    x_i((k+1)T) = x_i(kT) + Tu_i(kT),
\end{align}
where $x_i(kT)\in \mathbb{R}$, $u_i(kT)\in \mathbb{R}$, $k\in \mathbb{N}_{\geq 0}$, and  $i\in \lbrace 1,\ldots, n\rbrace$.
\begin{assumption}
\label{assum:GraphProperties}
The communication between the agents is represented as an \emph{undirected} and \emph{connected} graph ${\mathcal{G}=(\mathcal{V},\mathcal{E})}$, where the set of nodes $\mathcal{V}$ represents the agents and the set of edges $\mathcal{E}$ between the nodes represent which agents directly communicate with each other.
This graph is described by an adjacency matrix $A\in \mathbb{R}^{n\times n}$, where $a_{ij}=a_{ji}=1$ if the agents $i$ and $j\neq i$ are connected and otherwise $a_{ij}=0$. 
\end{assumption}
Similar to \cite{Atay}, the control input $u_i(kT)$ of the agents is given by
\begin{align}
    \label{eq:DelayDualRateConsensusLaw}
    u_i(kT) = \frac{\epsilon}{d_i}\sum_{j=1}^n a_{ij}\left(x_j(lT_s-T_d)-x_i(kT)\right),
\end{align}
where $d_i=\sum_{j=1}^n a_{ij}$ is the degree of vertex $i$ and $\epsilon$ is a controller gain.

Figure~\ref{fig:BlkDiagram} shows a block diagram of the proposed dual-rate setting.
While the state updates and controller inputs are sampled with sampling period $T$, indicated by time index $k$, the measurements of the other agents' states, $x_j$, are sampled with a different sampling period, i.e., $T_s$, and we use the index $l$ to indicate the different sampling rate. 
Having two different time indices to denote the fast and slow sampling rates is common in dual-rate consensus, see, for example, equation (7) in \cite{DualRateConsensusJapanesePaper}.
If $T_s= T$, we have a single-rate consensus problem as described above, where $k=l$, but if $T_s\neq T$ we have a dual-rate consensus problem, where $k\neq l$. 
In our work, we will establish a relation between $T$ and $T_s$ in Assumption~\ref{assum:SamplingPeriodRatio}.

Furthermore, delays might occur in the transmission of the state measurements, for example, if we assume  communication over a public network.
This delay in the state measurements is presented by $T_d\in \mathbb{R}_{\geq 0}$.
\begin{figure}
    \centering
    \includegraphics[width=0.5\textwidth]{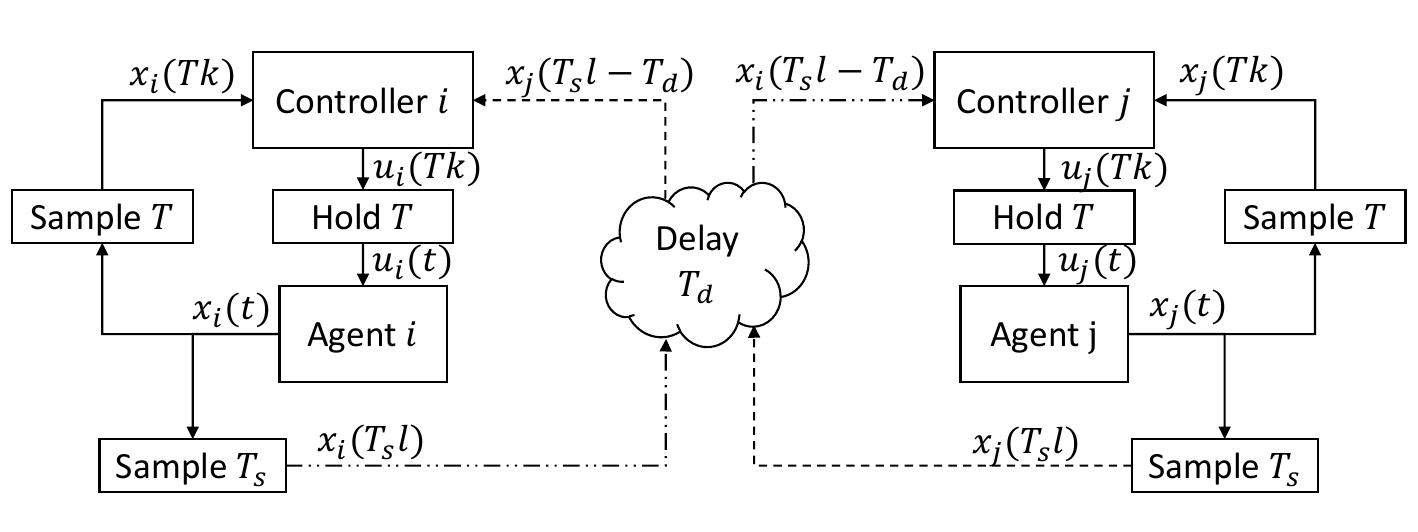}
    \caption{An overview of the dual-rate consensus setup with sampling period $T$ for the controller, sampling period $T_s$ for the transmitted measurements, and delay $T_d$.}
    \label{fig:BlkDiagram}
\end{figure}

We make the following assumptions on the rates and delay.
\begin{assumption}
    \label{assum:SamplingPeriodRatio}
    Let $T$ be the sampling period of the control input, then the measurements are sent with period $T_s=NT$, where $N\in \mathbb{N}_{\geq 1}$ is called the \emph{sampling period ratio}.
\end{assumption}
\begin{assumption}
    \label{assum:Delay}
    The delay is constant and a multiple of $T$, i.e., $T_d=hT$ and $h\in \mathbb{N}$.
\end{assumption}
In this work, we consider two problems. 
The first problem considers the feasibility of the consensus problem, i.e., if consensus can be achieved for certain $\epsilon$ and $N$ or not.
\begin{problem}
\label{prob:CanConsensusBeAchieved}
Given a fixed delay $h$, determine for which controller gains $\epsilon$ and sampling period ratios $N$ do the agents \eqref{eq:SingleIntegratorAgents} under the controller \eqref{eq:DelayDualRateConsensusLaw} achieve consensus.
\end{problem}
While the first problem addresses reaching a consensus, the second problem investigates how many measurements need to be sent to achieve an improved convergence to the consensus point.
\begin{problem}
\label{prob:WithWhatRateShouldWeCommunicate}
Given a fixed delay $h$ and a stabilizing controller gain $\epsilon$, determine $N$ such that the convergence to the consensus point is improved.
\end{problem}

\section{Convergence to the consensus value}
\label{sec:AnalysingTheStability}
To analyze the convergence to a consensus point, we begin by determining the closed-loop equations of the system with sampling period $T$ over a sampling period $T_s$, i.e., how the state of the $n$ agents evolves between $kNT$ and ${(k+1)NT}$.
For the sake of simplicity and without loss of generality we use $T=1$ in the remainder of the paper.

Based on Assumptions~\ref{assum:SamplingPeriodRatio} and \ref{assum:Delay}, we first determine the control input $u(kN+j)\in \mathbb{R}^n$ of all agents for $j\in[0,N)$, where the $i$th element of $u(kN+j)$ is the control input of agent $i$ as given in \eqref{eq:DelayDualRateConsensusLaw}.
To do so we need to determine the state measurements available at time step $kN$.
Hence, we find the smallest $\theta$ such that ${(k-\theta-1)N+h\leq kN}$ holds.
This results in $\theta=\lceil\frac{h}{N}\rceil-1$.
The next state measurement is received at $(k-\theta)N+h$, which in terms of the index $j$ is $j=h-\theta N\geq 0$.
Thus, based on \eqref{eq:DelayDualRateConsensusLaw} the control inputs to the $n$-dimensional system are given by
\begin{align}
u(Nk+j)=-\epsilon x(Nk+j)+\epsilon D^{-1}Ax\left(\left(k-\theta-1\right)N\right) 
\end{align}
for $0\leq j < \tau$ and 
\begin{align}
u(Nk+j)=-\epsilon x(Nk+j)+\epsilon D^{-1}Ax((k-\theta)N),
\end{align}
for $j \geq \tau$, where $\tau = h-\theta N$ and $D=\mathtt{diag}(d_1,\ldots, d_{n})$ is the diagonal matrix of vertex degrees.

This leads to the difference equation
\begin{equation}
\begin{aligned}
x((k+1)N) &= (1-\gamma) x(Nk) \\&+ \gamma \sum_{s=0}^1 f_s D^{-1}A x((k-\theta-s)N),
\end{aligned}
\end{equation}
where we introduced $\gamma=1-(1-\epsilon)^N$, 
\begin{align*}
f_0 = \frac{1-(1-\epsilon)^{N-\tau} }{1-(1-\epsilon)^N}\, \text{and}\ f_1=\frac{(1-\epsilon)^{N-\tau}(1-(1-\epsilon)^{\tau})}{1-(1-\epsilon)^N}.
\end{align*}
\begin{remark}
While $\gamma$, $f_0$, and $f_1$ are functions of $N$, we will omit this argument, for the sake of readability.
\end{remark}
\begin{lemma}
\label{lem:f1f0gamma}
For any $\epsilon\in(0,1)$, $N\geq 1$, and $h\geq 0$, it holds that $f_0+f_1=1$, $\gamma\in(0,1)$,  ${\lim_{N\rightarrow \infty}\gamma=1}$, ${\lim_{N\rightarrow \infty}f_0=1}$, and ${\lim_{N\rightarrow\infty}f_1=0}$. 
\end{lemma}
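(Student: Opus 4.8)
The plan is to dispatch the five claims one at a time, each reducing to elementary estimates once the quantity $\tau$ is under control. I would begin with the identity $f_0+f_1=1$: since $\epsilon\in(0,1)$ gives $(1-\epsilon)^N\in(0,1)$, the common denominator $1-(1-\epsilon)^N$ is nonzero, and adding the numerators yields $1-(1-\epsilon)^{N-\tau}+(1-\epsilon)^{N-\tau}-(1-\epsilon)^{N-\tau}(1-\epsilon)^{\tau}=1-(1-\epsilon)^{N}$, so the sum equals $1$.

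Next I would handle $\gamma$. From $\epsilon\in(0,1)$ we get $1-\epsilon\in(0,1)$, hence $0<(1-\epsilon)^N<1$ for every $N\geq 1$, which is exactly $\gamma=1-(1-\epsilon)^N\in(0,1)$; and since $(1-\epsilon)^N\to 0$ as $N\to\infty$, we obtain $\lim_{N\to\infty}\gamma=1$.

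The only step requiring some care is the limit of $f_0$ (and then of $f_1$), because $\tau=h-\theta N$ with $\theta=\lceil h/N\rceil-1$ depends on $N$ through a ceiling. The bounds $\lceil h/N\rceil-1<h/N\leq\lceil h/N\rceil$ show that $0\leq\tau\leq N$ for all $N$, so the exponents $N-\tau$ and $\tau$ are nonnegative and $f_0,f_1$ are well defined (and in fact lie in $[0,1]$). The key observation is that $\tau$ is eventually constant in $N$: for every $N>h$ one has $\lceil h/N\rceil=1$, hence $\theta=0$ and $\tau=h$. Substituting this gives $f_0=\frac{1-(1-\epsilon)^{N-h}}{1-(1-\epsilon)^N}$, and since $N-h\to\infty$ both numerator and denominator tend to $1$, so $\lim_{N\to\infty}f_0=1$; then $\lim_{N\to\infty}f_1=0$ follows immediately from $f_1=1-f_0$. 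The main (mild) obstacle is precisely recognizing that one must pass to the regime $N>h$, where $\theta$ collapses to $0$, before taking the limit — treating $\tau$ as a fixed constant independent of $N$ would be incorrect.
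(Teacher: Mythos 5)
Your proof is correct and is the direct elementary verification that the authors evidently had in mind: the paper states Lemma~1 without any proof at all, so there is nothing to diverge from. The one substantive step --- recognizing that $\tau=h-\theta N$ depends on $N$ through the ceiling, and that for $N>h\geq 1$ one gets $\theta=0$ and $\tau=h$ so that $f_0=\frac{1-(1-\epsilon)^{N-h}}{1-(1-\epsilon)^{N}}\to 1$ --- is exactly the point worth making explicit, and you handle it properly.

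One boundary case deserves a remark: for $h=0$ your claim that $N>h$ forces $\lceil h/N\rceil=1$ fails, since $\lceil 0/N\rceil=0$ gives $\theta=-1$ and hence $\tau=N$, under which the paper's formulas yield $f_0\equiv 0$ and $f_1\equiv 1$, contradicting the stated limits. This is an inconsistency inherited from the paper's definition of $\theta$ (taking instead the smallest \emph{nonnegative} $\theta$ gives $\tau=0$ and $f_0\equiv 1$, consistent with the lemma), and all later results assume $h\geq 1$, so it does not affect the rest of the argument; still, a complete proof of the lemma as stated for $h\geq 0$ should dispose of $h=0$ separately rather than fold it into the $N>h$ regime.
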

Based on Assumption~\ref{assum:SamplingPeriodRatio} we introduce a closed-loop system sampled at a larger period $T_s=NT$ given as follows
\begin{align}
\label{eq:SlowDynamicsWithLargeDelayAsInAtay}
x_s(l+1) = (1-\gamma) x_s(l) + \gamma \sum_{s=0}^1 f_s D^{-1}A x_s(l-\theta-s),
\end{align}
where we introduced the "slow" time step $l:=kN$ and ${l+1 := kN+N}$ related to sampling period $T_s$.
\begin{proposition}
\label{prop:ConsensusCondition}
Assume $\epsilon\in(0,1)$. The closed-loop dual-rate system reaches a consensus point if and only if the normalized Laplacian $L=I-D^{-1}A$ has a simple zero eigenvalue.
\end{proposition}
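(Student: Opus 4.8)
The plan is to decouple the graph structure from the delay structure in the slow recursion~\eqref{eq:SlowDynamicsWithLargeDelayAsInAtay} and to reduce the problem to the root locations of a one–parameter family of scalar polynomials. Since $D^{-1}A$ is similar, via $D^{1/2}$, to the symmetric matrix $D^{-1/2}AD^{-1/2}$, it is diagonalizable with real eigenvalues $\lambda_1,\dots,\lambda_n\in[-1,1]$; moreover $D^{-1}A\mathbf 1=\mathbf 1$, and the multiplicity of $1$ as an eigenvalue of $D^{-1}A$ equals the multiplicity of $0$ as an eigenvalue of $L=I-D^{-1}A$. I would then rewrite~\eqref{eq:SlowDynamicsWithLargeDelayAsInAtay} as a first-order recursion $\xi(l+1)=\mathcal A\,\xi(l)$ on the stacked state $\xi(l)=(x_s(l)^\top,\dots,x_s(l-\theta-1)^\top)^\top$, and, using the eigenbasis of $D^{-1}A$, block-diagonalize $\mathcal A$ into companion blocks, one per eigenvalue $\lambda_i$, with characteristic polynomial
\begin{equation*}
 p_{\lambda_i}(w)=w^{\theta+2}-(1-\gamma)w^{\theta+1}-\gamma f_0\lambda_i\,w-\gamma f_1\lambda_i .
\end{equation*}
Hence the spectrum of $\mathcal A$ is the union of the roots of the $p_{\lambda_i}$, and Lemma~\ref{lem:f1f0gamma} gives the convenient identity $p_{\lambda}(1)=\gamma(1-\lambda)$, so $w=1$ is a root of $p_{\lambda_i}$ exactly when $\lambda_i=1$, in which case the associated eigenvector of $\mathcal A$ is the ``constant lift'' of the corresponding eigenvector of $D^{-1}A$.

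By the usual linear-systems argument, reaching a consensus point amounts to $\mathcal A$ being convergent with $w=1$ as its only eigenvalue on or outside the unit circle and semisimple. So I need to establish: (i) $w=1$ is a \emph{simple} root of $p_1$ --- this follows from $p_1'(1)=1+\gamma(\theta+1-f_0)>0$ --- and all its remaining roots lie strictly inside the unit disk; (ii) for every $\lambda_i\ne 1$ all roots of $p_{\lambda_i}$ lie strictly inside the unit disk. For $|\lambda_i|<1$, both are immediate from a small-gain estimate: using $\gamma\in(0,1)$ and $f_0,f_1\ge 0$ with $f_0+f_1=1$ (Lemma~\ref{lem:f1f0gamma}), any root with $|w|\ge 1$ would satisfy $|w|^{\theta+2}\le\big((1-\gamma)+\gamma|\lambda_i|\big)|w|^{\theta+2}<|w|^{\theta+2}$, a contradiction; the same computation excludes roots with $|w|>1$ also when $\lambda_i\in\{1,-1\}$.

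The part I expect to be delicate --- and the main obstacle --- is the two boundary cases $\lambda_i=1$ with $w\ne 1$, and $\lambda_i=-1$ (which arises exactly when a component is bipartite, i.e. when $2$ is an eigenvalue of $L$): there $(1-\gamma)+\gamma|\lambda_i|=1$ and the crude bound is not strict. I would instead analyze the equality case of the triangle inequality: equality forces the summands $(1-\gamma)w^{\theta+1}$, $-\gamma f_0\lambda_i w$ and $-\gamma f_1\lambda_i$ to be nonnegative real multiples of $w^{\theta+2}$; since $1-\gamma>0$ and at least one of $f_0,f_1$ is positive, this pins $w$ to a single real value and then contradicts $|w^{\theta+2}|=1$ with the wrong sign, except at $w=1$ when $\lambda_i=1$. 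This shows $w=1$ is the unique unit-circle root of $p_1$ and that $p_{-1}$ has no unit-circle root. A little extra case checking is needed for the degenerate parameter regimes in which $\tau$, hence $f_0$ or $f_1$, vanishes (e.g. $h=0$, where $\theta=-1$, or $N\mid h$), but the argument is unchanged.

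Finally I would assemble the two directions at once. Combining (i)--(ii) with the boundary analysis, for $\epsilon\in(0,1)$ the system $\xi(l+1)=\mathcal A\,\xi(l)$ is convergent: every eigenvalue lies in the open unit disk except $w=1$, which is semisimple with multiplicity $\dim\ker L$ and eigenspace equal to the constant lift of $\ker L$. Since $x_s(l)$ is the first block of $\xi(l)$, the set of limits of $x_s(l)$ over all initial data is exactly $\ker L$. Therefore a consensus point --- limit in $\mathrm{span}(\mathbf 1)$ for every initial condition --- is reached if and only if $\ker L=\mathrm{span}(\mathbf 1)$, that is, if and only if $0$ is a simple eigenvalue of $L$ (concretely, if $0$ has multiplicity at least two one may start from the indicator vector of a single component and never reach global agreement).
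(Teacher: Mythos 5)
Your argument is correct, but it takes a genuinely different route from the paper: the paper's proof is a two-step reduction to a cited result, whereas yours is self-contained. Concretely, the paper observes that because $f_0+f_1=1$ (Lemma~\ref{lem:f1f0gamma}) the slow recursion \eqref{eq:SlowDynamicsWithLargeDelayAsInAtay} is exactly an instance of the multi-delay consensus system of Atay (equation (2.1) there, with the two delays $\theta$ and $\theta+1$ carrying the convex weights $f_0,f_1$), invokes Atay's Theorem~3.2 as a black box to get the equivalence with simplicity of the zero eigenvalue of $L$, and closes by noting $x((k+\theta)N)=x_s(k+\theta)$ so that the fast and slow trajectories share the same limit. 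You instead re-derive the underlying spectral analysis from scratch: the lift to a first-order recursion, the block-diagonalization along the eigenbasis of $D^{-1}A$, the identity $p_\lambda(1)=\gamma(1-\lambda)$, the small-gain bound for $|\lambda_i|<1$, the equality case of the triangle inequality for the unit-circle boundary cases $\lambda_i=\pm1$, and $p_1'(1)=1+\gamma(\theta+1-f_0)>0$ for simplicity of the root at $1$. This is in substance the content of Atay's Lemma~3.1/Theorem~3.2 (and of the paper's own Lemma~\ref{lem:AbsoluteValuesOfRoots}, which is likewise delegated to that reference), so what your version buys is a self-contained proof that also exposes the per-mode root structure reused throughout Section~\ref{sec:ConvergenceSpeed}; what it costs is length and the degenerate-case bookkeeping you flag (e.g.\ $h=0$, where $\theta=-1$ and the block degenerates to first order, or $f_0$ or $f_1$ vanishing). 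One point you leave implicit that the paper handles explicitly: the proposition concerns the dual-rate system itself, not only its subsampled version, so you should add the (easy) observation that convergence of $x_s(l)$ implies convergence of the full fast-rate trajectory $x(kN+j)$, since the intersample updates contract toward fixed delayed data.
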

\begin{proof}
From Lemma~\ref{lem:f1f0gamma} we know that $f_0+f_1=1$, such that the system \eqref{eq:SlowDynamicsWithLargeDelayAsInAtay} has the same structure as the system given by equation (2.1) in \cite{Atay}, where we have only two delays here, i.e., $\theta$ and $\theta+1$.
Therefore, we can use Theorem 3.2 of \cite{Atay} to show that, under the assumption $\epsilon\in(0,1)$, the dual-rate consensus values is reached asymptotically for all $h\geq 0$ and $N\geq1$ if and only if the normalized Laplacian has a simple zero eigenvalue.
Furthermore, assuming $x(0)=x_s(0)$ we know that $x((k+\theta) N)=x_s(k+\theta)$ for all integers $k, \theta\in\mathbb{N}_{>0}$. Hence, both systems will converge to the same point if the  slow system is converging.
\end{proof}

Proposition~\ref{prop:ConsensusCondition} solves Problem~\ref{prob:CanConsensusBeAchieved} and, as in \cite{Atay}, we observe that the convergence to the consensus point does neither depend on the delay nor the sampling period ratio. It only depends on the structure of the communication graph, i.e., the graph needs to be connected for the agents to converge to a consensus point, as assumed in Assumption~\ref{assum:GraphProperties}. 
\section{Convergence properties}
\label{sec:ConvergenceSpeed}
In the previous section, we have shown that the dual-rate consensus converges for all delays $h\in\mathbb{N}_{\geq 0}$ and sampling period ratios $N\in\mathbb{N}_{\geq 1}$ given that $\epsilon\in(0,1)$.
In this section, we aim to address Problem~\ref{prob:WithWhatRateShouldWeCommunicate} and investigate how fast the consensus value is achieved based on the controller tuning $\epsilon$, the sampling period ratio $N$, and the delay $h$. 
For the derivations in this section, we assume $N\in\mathbb{R}_{\geq 1}$ and then discuss how the obtained results translate to $N\in\mathbb{N}_{\geq 1}$ in Section~\ref{sec:ImplicationsForIntergers}.

Following \cite{Atay}, we re-write the state $x_s(l)$ as
\begin{align}
    x_s(l)=\sum_{i=0}^n\alpha_i(l)\nu_i,
\end{align}
where $\alpha_i(l)=\mu_i^{\top}x_s(l)$, and $\nu_i$ and $\mu_i$ are the right and left eigenvector corresponding to the $i$th eigenvalue $\lambda_i$ of the normalized Laplacian, respectively.
Recall that the eigenvalues of the normalized Laplacian are real and $\lambda_i\in[0,2]$ holds for all $i$ \cite{SpectralGraphTheory}.
The dynamics of $\alpha_i(l)$ can be derived as
\begin{align}
    \alpha_i(l+1)=(1-\gamma)\alpha_i(l)+(1-\lambda_i)\gamma \sum_{s=0}^1 f_s\alpha_i(l-\theta-s).
\end{align}
Since this is a linear difference equation, we can apply the z-transform to it to obtain the characteristic equation
\begin{align}
    \label{eq:CharacteristicEquation}
    P_i(z)=z^{\theta+2}-(1-\gamma)z^{\theta+1}-(1-\lambda_i)\gamma f_0 z - (1-\lambda_i)\gamma f_1.
\end{align}
The absolute values of the largest roots of \eqref{eq:CharacteristicEquation} point towards the convergence speed of the system's mode $\lambda_i$.

Let us state the following result on the roots of \eqref{eq:CharacteristicEquation}.
\begin{lemma}
\label{lem:AbsoluteValuesOfRoots}
Assume $\epsilon \in (0,1)$, $h\geq 0$, and $N\geq 1$. If $\lambda_i=0$ then 1 is a simple root of \eqref{eq:CharacteristicEquation} and all other roots have an absolute value less than 1. If $\lambda_i\neq 0$, then the absolute value of all roots of \eqref{eq:CharacteristicEquation} are smaller than 1.
\end{lemma}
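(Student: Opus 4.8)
The plan is to argue directly from the characteristic polynomial \eqref{eq:CharacteristicEquation}, splitting into the cases $\lambda_i=0$ and $\lambda_i\neq 0$. Throughout I would use the facts from Lemma~\ref{lem:f1f0gamma}, i.e.\ $\gamma\in(0,1)$ and $f_0+f_1=1$ with $f_0,f_1\geq 0$, together with $1-\gamma=(1-\epsilon)^N\in(0,1)$ and the fact that the eigenvalues of the normalized Laplacian lie in $[0,2]$, so $|1-\lambda_i|\leq 1$. The backbone is that, moving the low-order terms of \eqref{eq:CharacteristicEquation} to the right, their coefficients are in modulus nonnegative and sum to at most one; a triangle inequality then confines every root to the closed unit disk, and the equality case of that inequality identifies which roots may lie on the unit circle.

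Case $\lambda_i=0$. Substituting $z=1$ and using $f_0+f_1=1$ gives $P_i(1)=1-(1-\gamma)-\gamma(f_0+f_1)=0$, so $1$ is a root; differentiating, $P_i'(1)=1+\gamma(\theta+1-f_0)\geq 1>0$ (using $f_0\leq 1$ and $\theta\geq -1$, with $\theta=-1$ forcing $f_0=0$), so this root is simple. Writing any root as $z^{\theta+2}=(1-\gamma)z^{\theta+1}+\gamma f_0 z+\gamma f_1$ and using $|z|^{\theta+1}\geq\max\{1,|z|\}$ when $|z|\geq 1$, the triangle inequality yields $|z|^{\theta+2}\leq|z|^{\theta+1}$, hence $|z|\leq 1$. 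If $|z|=1$, equality must hold in that estimate, so the nonzero terms among $(1-\gamma)z^{\theta+1}$, $\gamma f_0 z$, $\gamma f_1$ share a common argument; since $\gamma f_1>0$ (note $f_1>0$ because $\tau>0$) this argument is $0$, forcing $z=1$ — and if $f_0=0$ one instead aligns $(1-\gamma)z^{\theta+1}$ with $\gamma f_1$, again getting $z^{\theta+1}=1$ and then $z=1$. Hence $1$ is the unique root on the circle and, being simple, all other roots have modulus $<1$. (When $\theta=-1$, $P_i$ is linear with the single root $1$.)

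Case $\lambda_i\neq 0$, i.e.\ $\lambda_i\in(0,2]$. The same estimate, now carrying the harmless factor $|1-\lambda_i|\leq 1$, again confines all roots to the closed unit disk. If $\lambda_i\in(0,2)$ then $|1-\lambda_i|<1$, and a root with $|z|=1$ would give $1\leq(1-\gamma)+|1-\lambda_i|\gamma<1$, impossible, so all roots are strictly inside. It remains to rule out unit-circle roots when $\lambda_i=2$ (so $1-\lambda_i=-1$ and the modulus bound is tight): a root with $|z|=1$ forces $(1-\gamma)z^{\theta+1}$, $-\gamma f_0 z$, $-\gamma f_1$ to be aligned, and since $-\gamma f_1<0$ their common direction is $\pi$, which when $f_0>0$ gives $z=1$ and then makes $(1-\gamma)z^{\theta+1}=1-\gamma>0$ point in direction $0\neq\pi$, a contradiction; the sub-case $f_0=0$ (exactly $N\mid h$, hence $f_1=1$) forces $z^{\theta+1}=-1$ while the equation then gives $z=1$, another contradiction, and the linear case $h=0$ has the single root $z=1-\lambda_i\gamma\in(-1,1)$. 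So every root has modulus $<1$ for all $\lambda_i\neq 0$.

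The delicate part is not the closed-disk confinement, which is a one-line estimate, but ruling out boundary roots in the two tight situations: $\lambda_i=0$, where I must show $z=1$ is the \emph{only} root on the circle and is simple, and $\lambda_i=2$, where the alternating signs of the coefficients make the alignment analysis fragile and the sub-case $f_0=0$ needs its own check. As a sanity check, $f_0+f_1=1$ makes \eqref{eq:SlowDynamicsWithLargeDelayAsInAtay} an instance of the delayed-averaging system of \cite{Atay} already invoked for Proposition~\ref{prop:ConsensusCondition}, whose Theorem~3.2 locates exactly these roots; I would still prefer the short self-contained argument above, since the precise conclusion (a single root at $1$, strict interior otherwise) is what the convergence-rate analysis needs.
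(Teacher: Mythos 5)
Your proof is correct, but it takes a genuinely different route from the paper's: the paper's entire proof consists of noting that $\gamma\in(0,1)$ (from Lemma~\ref{lem:f1f0gamma}) places \eqref{eq:CharacteristicEquation} in the form covered by Lemma~3.1 of \cite{Atay}, which is then cited for the conclusion. You instead give a self-contained argument: the triangle inequality together with $(1-\gamma)+|1-\lambda_i|\gamma(f_0+f_1)\leq 1$ confines all roots to the closed unit disk, and the equality case of that inequality---anchored by $\gamma f_1>0$, which you correctly justify via $\tau>0$---eliminates boundary roots except $z=1$ when $\lambda_i=0$, with $P_i'(1)=1+\gamma(\theta+1-f_0)\geq 1>0$ giving simplicity of that root; your separate treatments of $\lambda_i=2$, of the sub-case $f_0=0$ (i.e., $N$ dividing $h$), and of the linear case $\theta=-1$ are all sound and are exactly the spots where the estimate is tight and care is needed. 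What the paper's approach buys is brevity and reuse of the reference already invoked for Proposition~\ref{prop:ConsensusCondition}; what yours buys is independence from \cite{Atay} and an explicit identification of where the unit-circle bound can be attained, which is the structural information the subsequent convergence-rate analysis actually relies on.
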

\begin{proof}
From Lemma~\ref{lem:f1f0gamma} we obtain ${\gamma\in(0,1)}$.
Thus, Lemma 3.1 of \cite{Atay} gives us the stated results.
\end{proof}

Let the modes be ordered as follows ${0=\lambda_0\leq \lambda_1\leq \cdots \leq \lambda_{n-1}\leq 2}$.
Lemma~\ref{lem:AbsoluteValuesOfRoots} shows us that the eigenvalue $\lambda_0$ has the largest absolute root at $1$ independent of the choice of $\epsilon$, $h$, and $N$, while the largest absolute value of the roots of the eigenvalues $\lambda_i\neq 0$ might change depending on the choice of $\epsilon$, $h$, and $N$.

This shows us that the root of $P_0(z)$ with value 1 relates to consensus value and the consensus value is reached faster the faster the dynamics related to the other roots of all $P_i(z)$ decay to zero.
Therefore, in the remainder of this paper, we will look at the largest absolute value of the roots related to the modes $\lambda_i\neq 0$ and the second largest absolute value of the root related to $\lambda_0=0$, because they are an indicator on how fast the agents converge to a consensus value. This is inspired by the dominant pole approximation method to analyse the convergence speed, which is similar to the second largest eigenvalues modulus approach used in other consensus papers, see, for example, \cite{ConsensusSparseControl}.

\subsection{Special case: $1\leq h\leq N$}
Next, we want to investigate some properties of the (second) largest absolute roots if $\lambda_i\neq 0$ ($\lambda_i=0$) for the case where the delay is smaller than the sampling period ratio and we are varying the sampling period ratio in $N\in[h,\infty)$. 

First, note that whenever $1\leq h\leq N$, the characteristic equation \eqref{eq:CharacteristicEquation} is a second-order polynomial, i.e.,
\begin{align}
    \label{eq:CharacteristicEquationWithNgreaterh}
     P_i(z)=z^{2}-\left((1-\gamma)+(1-\lambda_i)\gamma f_0\right) z - (1-\lambda_i)\gamma f_1,
\end{align}
since $\theta=0$, which implies $\tau=h$.

Next, let us analyze what happens to the coefficients of $P_i(z)$ when $N\rightarrow\infty$.
\begin{lemma}
\label{lem:ConvergenceOfAbsoluteValueOfRoots}
Given $\epsilon\in(0,1)$ and $1\leq h\leq N$, we obtain
\begin{align}
    \lim_{N\rightarrow\infty}P_i(z)=z(z-(1-\lambda_i)).
\end{align}
\end{lemma}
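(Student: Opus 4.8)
The plan is simply to read off the two $N$-dependent coefficients of $P_i(z)$ from \eqref{eq:CharacteristicEquationWithNgreaterh} and pass to the limit in each of them separately, using Lemma~\ref{lem:f1f0gamma}. For $1\le h\le N$ we have $\theta=0$ and $\tau=h$, so the characteristic polynomial takes the second-order form $P_i(z)=z^2-c_1(N)\,z-c_0(N)$ with $c_1(N)=(1-\gamma)+(1-\lambda_i)\gamma f_0$ and $c_0(N)=(1-\lambda_i)\gamma f_1$, where $\gamma$, $f_0$, $f_1$ depend on $N$ (with $\epsilon$ and $h$ fixed). Convergence of $P_i$ is meant coefficientwise, i.e.\ pointwise in $z$, so it suffices to compute $\lim_{N\to\infty}c_1(N)$ and $\lim_{N\to\infty}c_0(N)$.

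First I would record the limits supplied by Lemma~\ref{lem:f1f0gamma}, namely $\gamma\to 1$, $f_0\to 1$, and $f_1\to 0$ as $N\to\infty$. (If one wants to see these directly in the present case: $\epsilon\in(0,1)$ gives $|1-\epsilon|<1$, hence $(1-\epsilon)^N\to 0$; and since here $\tau=h$ is fixed once $N\ge h$, also $(1-\epsilon)^{N-\tau}=(1-\epsilon)^{N-h}\to 0$, from which the three limits follow from the definitions of $\gamma$, $f_0$, $f_1$.) Substituting, $\lim_{N\to\infty}c_1(N)=(1-1)+(1-\lambda_i)\cdot 1\cdot 1=1-\lambda_i$ and $\lim_{N\to\infty}c_0(N)=(1-\lambda_i)\cdot 1\cdot 0=0$, so that $\lim_{N\to\infty}P_i(z)=z^2-(1-\lambda_i)z=z(z-(1-\lambda_i))$, which is the claimed identity.

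I do not expect a genuine obstacle here: every step is a one-line limit, and the only thing worth stating carefully is that the hypothesis $1\le h\le N$ is exactly what reduces \eqref{eq:CharacteristicEquation} to the quadratic \eqref{eq:CharacteristicEquationWithNgreaterh} with $\tau=h$ independent of $N$, so that the limits of $\gamma$, $f_0$, $f_1$ from Lemma~\ref{lem:f1f0gamma} apply without complication.
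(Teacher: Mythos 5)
Your proof is correct and follows exactly the route the paper takes: the paper's proof is the one-line invocation of Lemma~\ref{lem:f1f0gamma} ($\gamma\to 1$, $f_0\to 1$, $f_1\to 0$), applied coefficientwise to the quadratic form \eqref{eq:CharacteristicEquationWithNgreaterh}. Your added justification of those limits from $(1-\epsilon)^{N}\to 0$ and $(1-\epsilon)^{N-h}\to 0$ is a harmless elaboration of the same argument.
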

\begin{proof}
Using Lemma~\ref{lem:f1f0gamma} yields the stated result.
\end{proof}
This shows us that the largest absolute value of the roots converges to $|1-\lambda_i|$ and the other root converges to zero.

Subsequently, we will investigate properties of the (second) largest eigenvalues of the characteristic equation for different modes as $N$ increases.
\subsubsection{Considering $\lambda_0 = 0$}
For $\lambda_0=0$, we obtain the following result for the second largest absolute value.
\begin{proposition}
\label{prop:ConvergenceForLambda0}
Given $\epsilon\in(0,1)$ and $1\leq h\leq N$, the characteristic equation   \eqref{eq:CharacteristicEquationWithNgreaterh} for $\lambda_0=0$ is
\begin{equation}
    P_0(z)=(z-1)(z+\gamma f_1).
\end{equation}
This shows us that the second largest absolute value of the roots is given by $\bar{z}_0(N)=\gamma f_1$ and $\lim_{N\rightarrow \infty}\bar{z}_0(N)=0$.
\end{proposition}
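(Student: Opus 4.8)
The plan is to substitute $\lambda_0=0$ directly into the second-order characteristic equation \eqref{eq:CharacteristicEquationWithNgreaterh}, which is valid here since $1\le h\le N$ forces $\theta=0$. Setting $\lambda_0=0$ gives
\begin{align*}
P_0(z)=z^2-\bigl((1-\gamma)+\gamma f_0\bigr)z-\gamma f_1.
\end{align*}
The first step is then to simplify the coefficient of $z$ using the identity $f_0+f_1=1$ from Lemma~\ref{lem:f1f0gamma}: indeed $(1-\gamma)+\gamma f_0=(1-\gamma)+\gamma(1-f_1)=1-\gamma f_1$, so that $P_0(z)=z^2-(1-\gamma f_1)z-\gamma f_1$.

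The second step is to exhibit the factorization. Expanding $(z-1)(z+\gamma f_1)=z^2+\gamma f_1 z-z-\gamma f_1=z^2-(1-\gamma f_1)z-\gamma f_1$ matches the expression above, so $P_0(z)=(z-1)(z+\gamma f_1)$ and the two roots are $z=1$ and $z=-\gamma f_1$. To conclude that $\bar z_0(N)=\gamma f_1$ is the \emph{second} largest absolute value, I would invoke Lemma~\ref{lem:AbsoluteValuesOfRoots}: for $\lambda_0=0$ the value $1$ is a simple root and every other root has modulus strictly less than $1$, so $|-\gamma f_1|=\gamma f_1<1$ and $\gamma f_1$ is precisely the second largest modulus among the roots of $P_0$. (Alternatively one can argue directly that $\gamma\in(0,1)$ and $f_1\in(0,1)$ by Lemma~\ref{lem:f1f0gamma}, hence $\gamma f_1<1$.)

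The final step is the limit. Since $h$ is fixed, for all $N\ge h$ we have $\theta=0$ and $\tau=h$, and Lemma~\ref{lem:f1f0gamma} gives $\lim_{N\to\infty}\gamma=1$ and $\lim_{N\to\infty}f_1=0$; therefore $\lim_{N\to\infty}\bar z_0(N)=\lim_{N\to\infty}\gamma f_1=0$. I do not expect any real obstacle in this argument: the only point deserving care is justifying that $\gamma f_1$ is the second largest modulus rather than, say, coinciding with $1$, and this is handled cleanly by Lemma~\ref{lem:AbsoluteValuesOfRoots}; everything else is an elementary algebraic manipulation together with the limits already recorded in Lemma~\ref{lem:f1f0gamma}.
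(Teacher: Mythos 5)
Your proof is correct and follows essentially the same route as the paper: obtain the factorization $P_0(z)=(z-1)(z+\gamma f_1)$, identify $\gamma f_1$ as the second largest root modulus, and take the limit via Lemma~\ref{lem:f1f0gamma}. The only cosmetic difference is that you verify the factorization by direct expansion after simplifying the coefficient with $f_0+f_1=1$, whereas the paper deduces it from the known root at $z=1$ (Lemma~\ref{lem:AbsoluteValuesOfRoots}); both are sound.
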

\begin{proof}
From Lemma~\ref{lem:AbsoluteValuesOfRoots} we know that $P_0(z)$ has a root at $z=1$. This is used to obtain the factorization and, thus,  $\bar{z}_0(N)$. The limit for $\bar{z}_0(N)$ is obtained from Lemma~\ref{lem:f1f0gamma}.
\end{proof}
Hence, the second largest eigenvalue decays to zero, which means the larger $N$ is the faster the largest eigenvalue of the mode $\lambda_0$ dominates its dynamics.

\subsubsection{Considering $\lambda_i\in(0,1]$} 
For $\lambda_i\in(0,1]$ we will show that the largest absolute value of the roots of $P_i(z)$ is decreasing in the sampling period ratio $N$.

\begin{theorem}
\label{thm:DecreasingValue}
Assume $\lambda_i\in(0,1]$, $\epsilon\in(0,1)$, and $h\geq 1$ fixed. The largest absolute value of the root of the characteristic function $P_i(z)$ is decreasing in the sampling period ratio $N$ for $h\leq N$.
\end{theorem}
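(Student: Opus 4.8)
The plan is to work directly with the quadratic characteristic polynomial \eqref{eq:CharacteristicEquationWithNgreaterh}, which is valid precisely because $h\le N$ forces $\theta=0$. Write $P_i(z)=z^2-b(N)z-c(N)$ with
\[
b(N)=(1-\gamma)+(1-\lambda_i)\gamma f_0,\qquad c(N)=(1-\lambda_i)\gamma f_1,
\]
and let $\bar z_i(N)$ denote the largest absolute value of its two roots. Since $\lambda_i\in(0,1]$, we have $1-\lambda_i\in[0,1)$, and from Lemma~\ref{lem:f1f0gamma} ($\gamma\in(0,1)$, $f_0,f_1\in[0,1]$, $f_0+f_1=1$) one gets $b(N)\in(0,1)$ and $c(N)\in[0,1)$ with $b(N)+c(N)=(1-\gamma)+(1-\lambda_i)\gamma<1$. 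The first step is therefore to determine whether the roots are real or complex and to find a closed form for $\bar z_i(N)$ in each regime: in the real case $\bar z_i=\tfrac12\big(b+\sqrt{b^2+4c}\big)$ (the root of largest modulus, noting $c\ge 0$ so the discriminant term dominates), and in the complex case $\bar z_i=\sqrt{c}$ (the product of the conjugate roots is $-c$, so $|z|^2=c$; here $c>0$ since complex roots require $b^2+4c<0$, impossible with $c\ge0$ — so in fact the roots are always real when $c\ge 0$). This observation simplifies matters greatly: with $c(N)\ge 0$ the discriminant $b^2+4c$ is nonnegative, the roots are real, and $\bar z_i(N)=\tfrac12\big(b(N)+\sqrt{b(N)^2+4c(N)}\big)$.

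The second step is to show this expression is decreasing in $N$ on $[h,\infty)$. The function $g(b,c)=\tfrac12(b+\sqrt{b^2+4c})$ is increasing in both $b$ and $c$ on the relevant domain, so it suffices to show that $b(N)$ and $c(N)$ are each (weakly) decreasing in $N$. I would compute $\gamma$, $f_0$, $f_1$ explicitly as functions of $N$ using $\gamma=1-(1-\epsilon)^N$, $\gamma f_0 = 1-(1-\epsilon)^{N-\tau}=1-(1-\epsilon)^{N-h}$ (since $\tau=h$ here), and $\gamma f_1=\gamma-\gamma f_0=(1-\epsilon)^{N-h}-(1-\epsilon)^N=(1-\epsilon)^{N-h}\big(1-(1-\epsilon)^{h}\big)$. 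Substituting, $c(N)=(1-\lambda_i)(1-\epsilon)^{N-h}(1-(1-\epsilon)^h)$, which is manifestly decreasing in $N$ since $(1-\epsilon)^{N-h}$ is decreasing. Likewise $b(N)=1-(1-\epsilon)^N+(1-\lambda_i)\big((1-\epsilon)^{N-h}-(1-\epsilon)^N-? \big)$ — more carefully, $b(N)=(1-\gamma)+(1-\lambda_i)\gamma f_0=(1-\epsilon)^N+(1-\lambda_i)\big(1-(1-\epsilon)^{N-h}\big)$. Differentiating in $N$ (treating $N$ real, as permitted in this section), $\frac{d}{dN}b(N)=\ln(1-\epsilon)\big[(1-\epsilon)^N-(1-\lambda_i)(1-\epsilon)^{N-h}\big]=\ln(1-\epsilon)(1-\epsilon)^{N-h}\big[(1-\epsilon)^h-(1-\lambda_i)\big]$. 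Since $\ln(1-\epsilon)<0$, this is negative iff $(1-\epsilon)^h>1-\lambda_i$, which need not hold in general — so $b(N)$ is not monotone for all parameters.

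That last point is the main obstacle, and it is why a crude "both coefficients decrease" argument fails. The fix is to not separate $b$ and $c$ but to differentiate $\bar z_i(N)$ directly, or better, to use the fact that at a root $z=\bar z_i(N)$ we have $z^2=b(N)z+c(N)$, i.e. $P_i(z;N)=0$ defines $\bar z_i(N)$ implicitly. By implicit differentiation, $\frac{d\bar z_i}{dN}=-\frac{\partial_N P_i}{\partial_z P_i}$ evaluated at $z=\bar z_i$; since $\bar z_i$ is the larger real root, $\partial_z P_i(\bar z_i)=2\bar z_i - b>0$, so the sign of $\frac{d\bar z_i}{dN}$ is the opposite of the sign of $\partial_N P_i(\bar z_i;N)=-b'(N)\bar z_i - c'(N)$. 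Thus it suffices to show $b'(N)\bar z_i(N)+c'(N)\le 0$. Writing $b'=\ln(1-\epsilon)(1-\epsilon)^{N-h}[(1-\epsilon)^h-(1-\lambda_i)]$ and $c'=(1-\lambda_i)\ln(1-\epsilon)(1-\epsilon)^{N-h}(1-(1-\epsilon)^h)$, factoring out the common negative quantity $\ln(1-\epsilon)(1-\epsilon)^{N-h}$, the claim reduces to
\[
\bar z_i(N)\big[(1-\epsilon)^h-(1-\lambda_i)\big]+(1-\lambda_i)\big(1-(1-\epsilon)^h\big)\ \ge\ 0.
\]
Rearranging, this is $\bar z_i(N)(1-\epsilon)^h + (1-\lambda_i)(1-(1-\epsilon)^h)(1-?) $ — collecting the $(1-\epsilon)^h$ and constant terms: it is equivalent to $(1-\epsilon)^h(\bar z_i - (1-\lambda_i)) + (1-\lambda_i)(1-\bar z_i)\ge 0$, i.e. $(1-\epsilon)^h(\bar z_i - (1-\lambda_i)) \ge (1-\lambda_i)(\bar z_i - 1)$. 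The right side is $\le 0$ because $\bar z_i<1$ (Lemma~\ref{lem:AbsoluteValuesOfRoots}, as $\lambda_i\neq 0$), so it suffices that the left side is $\ge 0$, i.e. that $\bar z_i(N)\ge 1-\lambda_i$. This final inequality I would establish from the closed form $\bar z_i=\tfrac12(b+\sqrt{b^2+4c})$: since $c=(1-\lambda_i)\gamma f_1\ge 0$ and $b=(1-\epsilon)^N+(1-\lambda_i)(1-(1-\epsilon)^{N-h})$, evaluating $P_i(1-\lambda_i)$ and checking its sign shows $P_i(1-\lambda_i)\le 0$, which (since $P_i\to+\infty$) places $1-\lambda_i$ between the two real roots or to the left of both, hence $\bar z_i\ge 1-\lambda_i$. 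Chaining these inequalities gives $\frac{d\bar z_i}{dN}\le 0$ on $[h,\infty)$, completing the proof. The delicate bookkeeping in the implicit-differentiation sign chase — in particular verifying $\bar z_i\ge 1-\lambda_i$ and handling the boundary behavior at $N=h$ (where $\tau=h=N$, $\theta=0$ still holds) — is where the real work lies.
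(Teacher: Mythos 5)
Your proposal is correct and follows essentially the same route as the paper: both differentiate the dominant root of the quadratic \eqref{eq:CharacteristicEquationWithNgreaterh} (you via implicit differentiation in $N$, the paper via explicit differentiation of $p_1(g)+\sqrt{p_1(g)^2+p_2(g)}$ after substituting $g=(1-\epsilon)^N$), and both reduce monotonicity to the identical scalar inequality $(1-b_ic)\bar{z}_i+b_i(c-1)\geq 0$ with $b_i=1-\lambda_i$ and $c=(1-\epsilon)^{-h}$. The only divergence is the last step: the paper closes this by a case split on the sign of $1-b_ic$ using only $\bar{z}_i<1$ from Lemma~\ref{lem:AbsoluteValuesOfRoots}, whereas you additionally establish $\bar{z}_i\geq 1-\lambda_i$ from $P_i(1-\lambda_i)=(1-\lambda_i)(1-\epsilon)^{N-h}(-\lambda_i)\leq 0$ and combine it with $\bar{z}_i<1$ --- both arguments are valid.
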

\begin{proof}
The proof can be found in Appendix~\ref{app:ProofOfDecreasingValueTheorem}.
\end{proof}

Theorem~\ref{thm:DecreasingValue} shows us that the larger we choose the sampling period ratio the faster the modes $\lambda_i\in(0,1]$ converge to zero. 
Thus, increasing the sampling period ratio has a positive effect on the convergence speed of the modes $\lambda_i\in(0,1]$ for $1\leq h\leq N$. 
Interestingly, this result tells us that the agents should never communicate to improve the convergence speed of these modes, since the optimal sampling period ratio for all $\lambda_i\in(0,1]$ is $N\rightarrow\infty$.

Finally, we want to show that the largest absolute value of the roots is decreasing in $\lambda_i$.
\begin{lemma}
\label{lem:LargestAbsoluteValueOfZeroAtSmallestEigenvalue}
Assume $\lambda_i,\lambda_j\in(0,1]$ and $\lambda_i\leq \lambda_j$. Then $\bar{z}_i(N)\geq\bar{z}_j(N)$ for all $N$ when $1\leq h\leq N$ and $\epsilon\in(0,1)$.
\end{lemma}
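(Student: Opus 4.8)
The plan is to work directly with the quadratic form \eqref{eq:CharacteristicEquationWithNgreaterh}, which is available precisely because $1\le h\le N$ (so $\theta=0$, $\tau=h$). Write $P_i(z)=z^{2}-b_i z-c_i$ with $b_i=(1-\gamma)+(1-\lambda_i)\gamma f_0$ and $c_i=(1-\lambda_i)\gamma f_1$. First I would pin down the signs of the coefficients: since $\epsilon\in(0,1)$ and $\tau=h\ge1$, Lemma~\ref{lem:f1f0gamma} together with the explicit formulas for $f_0,f_1$ gives $\gamma\in(0,1)$, $f_1>0$, and $f_0\ge0$ with $f_0+f_1=1$; combined with $\lambda_i\in(0,1]$, i.e.\ $1-\lambda_i\in[0,1)$, this yields $b_i>0$ and $c_i\ge0$ (and likewise for index $j$).

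Next I would identify $\bar{z}_i(N)$ explicitly. The discriminant $b_i^{2}+4c_i$ is nonnegative, so both roots are real and equal $z_{\pm}=\tfrac12\bigl(b_i\pm\sqrt{b_i^{2}+4c_i}\bigr)$, with $z_{+}\ge0\ge z_{-}$; since $b_i\ge0$ we get $z_{+}\ge|z_{-}|$, hence $\bar{z}_i(N)=z_{+}=\tfrac12\bigl(b_i+\sqrt{b_i^{2}+4c_i}\bigr)$, and the same identity holds for $\bar{z}_j(N)$.

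Finally, I would show this expression is non-increasing in the mode $\lambda$. Both $b_i$ and $c_i$ are affine and non-increasing in $\lambda_i$, with slopes $-\gamma f_0\le0$ and $-\gamma f_1<0$ respectively. Since $b_i>0$, the map $\lambda_i\mapsto b_i^{2}+4c_i$ is then strictly decreasing, so $\sqrt{b_i^{2}+4c_i}$ is decreasing, and therefore $\bar{z}_i(N)$ is decreasing in $\lambda_i$. Applying this with $\lambda_i\le\lambda_j$ yields $\bar{z}_i(N)\ge\bar{z}_j(N)$, as claimed.

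I do not expect a serious obstacle here: the whole argument reduces to sign bookkeeping for a quadratic plus one monotonicity observation. The only points needing a little care are the borderline case $N=h$, where $f_0=0$ and $b_i$ ceases to depend on $\lambda_i$ — the conclusion still holds because $c_i$, and hence the discriminant, remains strictly decreasing in $\lambda_i$ — and the verification that it is the positive root (not $|z_{-}|$) that realizes the maximal modulus, which is immediate from $b_i\ge0$.
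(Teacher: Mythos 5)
Your proof is correct and follows essentially the same route as the paper's: both identify the largest-modulus root of the quadratic \eqref{eq:CharacteristicEquationWithNgreaterh} as the positive root $\tfrac12\bigl(b+\sqrt{b^{2}+4c}\bigr)$ and reduce the claim to showing that both coefficients are non-increasing in $\lambda_i$ (the paper phrases this via the substitution $g=(1-\epsilon)^N$, $c=(1-\epsilon)^{-h}$ and the inequalities $p_1^i\ge p_1^j$, $p_2^i\ge p_2^j$, which are exactly your coefficient monotonicity). Your explicit treatment of the borderline case $N=h$ and the verification that the positive root realizes the maximal modulus are details the paper leaves implicit, but the argument is the same.
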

\begin{proof}
The proof can be found in Appendix~\ref{app:ProofRootIncreasingWithEigenvalue}.
\end{proof}

\subsubsection{Considering $\lambda_i\in(1,2]$}

In this section, we assume $\lambda_i\in(1,2]$ and investigate how the largest absolute value of the roots of $P_i(z)$ behaves when we increase the sampling period $N$ ratio for $h\geq 1$ and $\epsilon\in(0,1)$.

\begin{theorem}
\label{thm:MinimumExists}
Let $\lambda_i\in(1,2]$, $\epsilon\in (0,1)$ and $h\geq 1$.
The largest absolute value of the roots of $P_i(z)$ reaches its unique local minimum at the sampling period ratio $N=\frac{\log(g_1)}{\log(1-\epsilon)}$, where $g_1$ is the smallest root of
\begin{align}
    \label{eq:QuadraticEquationForOptimum}
    \frac{1}{4}\big(\left(1+b_ic\right)g-b_i\big)^2-gb_i(c-1)=0,
\end{align}
$b_i = |1-\lambda_i|$, and $c=(1-\epsilon)^{-h}$.
\end{theorem}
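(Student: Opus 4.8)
The plan is to reparametrize everything in terms of $g:=(1-\epsilon)^N$ and reduce the claim to an elementary case analysis on the roots of the quadratic \eqref{eq:CharacteristicEquationWithNgreaterh}. First I would rewrite the coefficients using Lemma~\ref{lem:f1f0gamma} together with $\tau=h$: one has $1-\gamma=(1-\epsilon)^N$, $\gamma f_0=1-(1-\epsilon)^{N-h}$ and $\gamma f_1=(1-\epsilon)^{N-h}-(1-\epsilon)^N$. With $b_i=|1-\lambda_i|=\lambda_i-1\in(0,1]$, $c=(1-\epsilon)^{-h}>1$, and $g=(1-\epsilon)^N$ (a strictly decreasing bijection of $N\in[h,\infty)$ onto $(0,(1-\epsilon)^h]$), the polynomial \eqref{eq:CharacteristicEquationWithNgreaterh} becomes $P_i(z)=z^2-B(g)z+C(g)$ with $B(g)=(1+b_ic)g-b_i$ and $C(g)=b_i(c-1)g>0$. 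The key observation is that the discriminant $\Delta(g):=B(g)^2-4C(g)$ is exactly $4$ times the left-hand side of \eqref{eq:QuadraticEquationForOptimum}; it is an upward parabola in $g$ whose two roots $g_1<g_2$ are both positive, since their sum $2b_i(b_ic+2c-1)/(1+b_ic)^2$ and product $b_i^2/(1+b_ic)^2$ are positive for $c>1$. Thus $g_1$ is precisely the ``smallest root'' appearing in the statement.

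Next I would locate $g_1$ relative to the zero $g^\ast:=b_i/(1+b_ic)$ of $B$. Since $\Delta(g^\ast)=-4C(g^\ast)<0$, one gets $g_1<g^\ast<g_2$; and $g^\ast<1/c=(1-\epsilon)^h$ is a one-line inequality ($b_ic<1+b_ic$), so $g_1$ lies in the admissible interval $(0,(1-\epsilon)^h)$, i.e. the candidate optimizer $N=\log(g_1)/\log(1-\epsilon)$ satisfies $N>h$. Moreover $B(g)\le B(g_1)<B(g^\ast)=0$ for all $g\le g_1$, while $B(g)\ge B(g_2)>0$ for $g\ge g_2$.

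Then I would describe the largest root modulus $R(g)$ branch by branch and establish monotonicity. For $g\in[g_1,g_2]$ the two roots have modulus $\sqrt{C(g)}=\sqrt{b_i(c-1)}\,\sqrt g$, which is strictly increasing. For $g\in(0,g_1]$ the roots are real and both negative (because $B(g)<0$ there), so $R(g)=\tfrac12\bigl(-B(g)+\sqrt{\Delta(g)}\bigr)$; here $-B$ is affine decreasing and $\sqrt{\Delta}$ is decreasing on this interval (we sit to the left of the vertex $(g_1+g_2)/2>g_1$ of the parabola), hence $R$ is strictly decreasing on $(0,g_1]$. If $(1-\epsilon)^h>g_2$, then on $[g_2,(1-\epsilon)^h]$ the roots are real and both positive, so $R(g)=\tfrac12\bigl(B(g)+\sqrt{\Delta(g)}\bigr)$ is strictly increasing. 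Using $\Delta(g_1)=\Delta(g_2)=0$ one checks that $R$ is continuous at $g_1$ and $g_2$. Combining the pieces, $R$ is strictly decreasing on $(0,g_1]$ and strictly increasing on $[g_1,(1-\epsilon)^h]$, so it has a unique (local and global) minimum on the admissible set at $g=g_1$; since $g\mapsto N$ is a decreasing bijection, the largest absolute value of the roots of $P_i(z)$ viewed as a function of $N$ has its unique local minimum at $N=\log(g_1)/\log(1-\epsilon)$.

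The main obstacle is the branch analysis: correctly pinning down the sign of $B(g)$ on each regime — which decides whether the dominant real root equals $\tfrac12(B+\sqrt\Delta)$ or $\tfrac12(-B+\sqrt\Delta)$ — and then reading off monotonicity; this is exactly where the ordering $g_1<g^\ast<g_2$ and the location of the parabola's vertex do the work. Everything else (the coefficient identities from Lemma~\ref{lem:f1f0gamma}, positivity of $C$, and continuity of $R$ at the branch points) is routine bookkeeping.
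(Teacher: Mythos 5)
Your proposal is correct and follows essentially the same route as the paper's proof in Appendix~C: the same substitution $g=(1-\epsilon)^N$ with $b_i$ and $c$, the same identification of $g_1<g_2$ as the roots of the discriminant of the quadratic $P_i$, and the same three-regime analysis (real negative roots on $(0,g_1]$, complex-conjugate roots with modulus $\sqrt{b_i(c-1)g}$ on $[g_1,g_2]$, real positive roots beyond $g_2$) yielding a unique minimum at $g_1$. The only cosmetic difference is that you obtain monotonicity of the dominant-root modulus on $(0,g_1]$ from the location of the vertex of the discriminant parabola, whereas the paper differentiates $\bar{z}_i(g)$ explicitly and checks the sign of the derivative.
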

\begin{proof}
The proof can be found in Appendix~\ref{app:ProofOfTheoremAboutMinimumExistsForModes}.
\end{proof}
Theorem~\ref{thm:MinimumExists} shows us that for each mode $\lambda_i\in(1,2]$ there exists a unique sampling period ratio that minimizes the largest absolute value of the roots of \eqref{eq:CharacteristicEquationWithNgreaterh}. 
However, this sampling period ratio is not necessarily the same for all $\lambda_i\in(1,2]$, since the value of $g_1$ depends on $\lambda_i$. 
This can be seen in Figure~\ref{fig:OptimizationOtherGraph} in Section~\ref{sec:NumericalExamples}.

\subsection{Choosing the sampling period ratio $N$}

In the previous section, we analysed the convergence behavior of the modes $\lambda_i$. 
In this section, we will use these results to set up an optimization problem to find a sampling period ratio that improves the convergence speed according to the dominant pole approximation method, i.e., it minimizes the absolute value of the root with the largest absolute value.

Similar to before we define the (second) largest absolute value of the roots of a mode $\lambda_i$ as $\bar{z}_i(N)$.
Since the slowest modes approximately dominate the dynamics of the system, we want to find the optimal sampling period ratio by minimizing the objective,
\begin{align}
    \min_{N\geq h}\max\left( \bar{z}_0(N), \bar{z}_1(N),\cdots,\bar{z}_{n-2}(N),\bar{z}_{n-1}(N)\right),
\end{align}
where we have ordered the modes as follows, ${0=\lambda_0<\lambda_1\leq \lambda_2\leq \cdots \leq \lambda_{n-2} \leq \lambda_{n-1}}$ and recall that $n$ is the number of agents. 
This objective makes sure that the dominating absolute value of the roots is minimized, which in turn makes sure that the modes $\lambda_i$ will decay as fast as possible.
This is equivalent to minimizing the second largest eigenvalue modulus, which is a common metric to analyze the convergence speed of consensus problems \cite{ConsensusSparseControl}.

From Lemma~\ref{lem:LargestAbsoluteValueOfZeroAtSmallestEigenvalue} we know that $\bar{z}_1(N)\geq \bar{z}_i(N)$ for all $i>0$ such that $\lambda_i\in(0,1]$.
This simplifies the optimization objective to
\begin{align}
    \label{eq:OptimizationObjective}
    \min_{N\geq h}\max_{i\in \mathcal{I}_{>1}}\left( \bar{z}_0(N), \bar{z}_1(N),\bar{z}_i(N)\right),
\end{align}
where $\mathcal{I}_{>1}$ is the set of indices for which $\lambda_i\in(1,2]$.
Now we want to determine when \eqref{eq:OptimizationObjective} has a finite minimizer $N^*$.

\begin{theorem}
\label{thm:FiniteMinimizer}
Given $\epsilon\in(0,1)$ and $h\geq 1$, a finite minimizer $N^*$ of 
\begin{align}
    \min_{N\geq h}\max_{i\in \mathcal{I}_{>1}}\left( \bar{z}_0(N), \bar{z}_1(N),\bar{z}_i(N)\right),
\end{align}
exists if and only if $|1-\lambda_1| \leq |1-\lambda_{n-1}|$ where $\lambda_1 = \lambda_{n-1}$ needs to hold in the case of equality.
\end{theorem}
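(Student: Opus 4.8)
The plan is to treat the objective $F(N):=\max_{i\in\mathcal{I}_{>1}}\big(\bar z_0(N),\bar z_1(N),\bar z_i(N)\big)$ as a continuous function of $N\in[h,\infty)$, to identify $\lim_{N\to\infty}F(N)$, and then to decide when the infimum of $F$ over $[h,\infty)$ is attained at a finite point. Write $b_i:=|1-\lambda_i|$. First I would record the facts that drive the argument: since $\mathrm{tr}(L)=n$ and $\lambda_0=0$ is simple, the remaining eigenvalues sum to $n$, hence $\lambda_{n-1}\ge n/(n-1)>1$; therefore $\mathcal{I}_{>1}\neq\emptyset$, $b_{n-1}=\lambda_{n-1}-1>0$, and $\max_{i\in\mathcal{I}_{>1}}b_i=b_{n-1}$ because $|1-\lambda|=\lambda-1$ is increasing on $(1,2]$. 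By Proposition~\ref{prop:ConvergenceForLambda0}, $\bar z_0(N)\to0$, and by Lemma~\ref{lem:ConvergenceOfAbsoluteValueOfRoots} the limiting polynomial $z(z-(1-\lambda_i))$ has largest-modulus root $|1-\lambda_i|$, so $\bar z_i(N)\to b_i$ for each $i$; consequently $\lim_{N\to\infty}F(N)=\max(b_1,b_{n-1})=:\rho$. Finally, whenever $\lambda_i=\lambda_j$ the polynomials $P_i,P_j$ coincide, so $\bar z_i\equiv\bar z_j$.

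For the ``only if'' direction I would argue by contraposition. Suppose the stated condition fails, i.e. either $b_1>b_{n-1}$, or $b_1=b_{n-1}$ with $\lambda_1<\lambda_{n-1}$. In both cases $\lambda_1<1$: if $\lambda_1\ge1$ then $b_1=\lambda_1-1\le\lambda_{n-1}-1=b_{n-1}$, and since $\lambda_{n-1}>1$ any equality here forces $\lambda_1=\lambda_{n-1}$, which is excluded in both alternatives. Hence $\lambda_1\in(0,1)$ (recall $\lambda_1>0$ by connectedness), so $\rho=b_1$, and Theorem~\ref{thm:DecreasingValue} applies with $\lambda_1\in(0,1]$: $\bar z_1$ is strictly decreasing on $[h,\infty)$ with limit $b_1$, so $\bar z_1(N)>b_1$ for every finite $N\ge h$. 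Therefore $F(N)\ge\bar z_1(N)>b_1=\rho$ for all finite $N$, while $F(N)\to\rho$; thus $\inf_{N\ge h}F(N)=\rho$ and is not attained, i.e. no finite minimizer exists.

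For the ``if'' direction, assume $b_1\le b_{n-1}$ with $\lambda_1=\lambda_{n-1}$ whenever $b_1=b_{n-1}$; then $\rho=b_{n-1}$. The key step is to show $F(N)=\bar z_{n-1}(N)$ for all sufficiently large $N$. Indeed, every mode contributing to $F$ either satisfies $\lambda_i=\lambda_{n-1}$ — and then $\bar z_i\equiv\bar z_{n-1}$ — or has limit strictly below $b_{n-1}$: $\bar z_0(N)\to0$; $\bar z_1(N)\to b_1$, where the hypothesis forces $b_1<b_{n-1}$ unless $\lambda_1=\lambda_{n-1}$; and for $i\in\mathcal{I}_{>1}$ with $\lambda_i<\lambda_{n-1}$ one has $b_i=\lambda_i-1<b_{n-1}$. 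Picking $\delta>0$ below one third of the gap between $b_{n-1}$ and the largest of these strictly smaller limits, continuity and the limits give $N_1\ge h$ with every such term below $b_{n-1}-\delta$ and $\bar z_{n-1}(N)>b_{n-1}-\delta$ for $N\ge N_1$, so $F(N)=\bar z_{n-1}(N)$ there. By Theorem~\ref{thm:MinimumExists}, $\bar z_{n-1}$ is non-increasing up to $N^{\#}:=\max\{h,\tfrac{\log(g_1)}{\log(1-\epsilon)}\}$ and non-decreasing afterwards, with $\bar z_{n-1}(N)<b_{n-1}$ for large $N$, so it attains its infimum over $[h,\infty)$ at the finite point $N^{\#}$. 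Splitting $\inf_{N\ge h}F=\min\big(\inf_{N\in[h,N_1]}F,\ \inf_{N\ge N_1}\bar z_{n-1}\big)$, the first infimum is attained on a compact interval and the second at $N^{\#}$ (if $N^{\#}\ge N_1$) or at $N_1$; hence $F$ attains its minimum at a finite $N^{*}$.

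I expect the main obstacle to be the ``if'' direction, and within it the step of upgrading the ``unique local minimum'' supplied by Theorem~\ref{thm:MinimumExists} to a genuine global minimum attained at a finite $N$ — that is, establishing the decrease-then-increase shape of $\bar z_{n-1}$, equivalently that $\bar z_{n-1}(N)<b_{n-1}$ for large $N$ (which can alternatively be read off from the explicit quadratic formula for the roots of $P_{n-1}$). A more routine point is verifying that $\bar z_{n-1}$, rather than one of the other modes, controls $F$ for large $N$; this is exactly where the hypothesis $|1-\lambda_1|\le|1-\lambda_{n-1}|$, with equality only if $\lambda_1=\lambda_{n-1}$, is used.
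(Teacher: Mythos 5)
Your proof is correct and follows essentially the same route as the paper: identify the limits $\bar z_0\to 0$, $\bar z_i\to|1-\lambda_i|$, use the monotonicity of $\bar z_1$ from Theorem~\ref{thm:DecreasingValue} to rule out a finite minimizer when the condition fails, and use the decrease-then-increase shape of $\bar z_{n-1}$ from Theorem~\ref{thm:MinimumExists} to produce one when it holds. Your $\delta$--$N_1$ argument showing $F(N)=\bar z_{n-1}(N)$ for large $N$, combined with compactness on $[h,N_1]$, actually makes the attainment of the infimum in the case $|1-\lambda_1|<|1-\lambda_{n-1}|$ more rigorous than the paper's somewhat informal ``there should be at least a point smaller than $|1-\lambda_{n-1}|$'' step.
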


\begin{proof}
The proof can be found in Appendix~\ref{app:ProofOfFiniteMinimizer}
\end{proof}

Theorem~\ref{thm:FiniteMinimizer} shows us that for graphs that fulfill ${|1-\lambda_1|<|1-\lambda_{n-1}|}$ we can find a sampling period ratio $N^*$ that minimizes the largest absolute value of all roots.
Further, since we use the dominant pole approximation, the validity of this approximation depends on how fast the second root of $P_i(z)$ converges to zero. 
For large enough $N$, the roots of $P_i(z)$ are both real for all $i$ such that we expect that the dominant pole approximation should hold for large enough $N$. 
Similarly, the larger $\epsilon\in(0,1)$ is the larger $\gamma\in(0,1)$ is and the faster second root of $P_i(z)$ converges to zero for increasing $N$, which might indicate that $N^*$ gets closer to the true optimal rate, $N_{opt}$ for convergence as $\epsilon\rightarrow 1$.
In our numerical examples in Section~\ref{sec:NumericalExamples}, we investigate how close $N^*$ is to the sampling period ratio that minimizes the overall convergence speed of the system.

Finally, we would like to point out that our simulations indicate that $\bar{z}_{n-1}\geq \bar{z}_i$ for all $i\in \mathcal{I}_{>1}$, see, for example, Figure~\ref{fig:OptimizationOtherGraph}.
If this fact were true the objective \eqref{eq:OptimizationObjective} would simplify to
\begin{align}
    \max\left( \bar{z}_0(N), \bar{z}_1(N),\bar{z}_{n-1}(N)\right),
\end{align}
which means we would need to look at three instead of $\#\mathcal{I}_{>1}+2$ modes. Thus, the information needed from the graph is reduced significantly.
While we could not show that this result holds for the investigated dual-rate consensus problem, this has been shown for the single-rate case, see, for example, \cite{ConsensusSparseControl}. Therefore, we conjecture that this result holds for the dual-rate case as well.

\subsection{Implications of the results for integer-valued $N$}
\label{sec:ImplicationsForIntergers}
The derivations for the results in this section have so far assumed that $N\in \mathbb{R}_{\geq 1}$. Therefore, we want to discuss how these results translate to integer-valued sampling period ratios as assumed in Assumption~\ref{assum:SamplingPeriodRatio}.

Since Theorem~\ref{thm:DecreasingValue} states that the optimal $N$ for modes $\lambda_i\in(0,1]$ is infinitely large, this result also holds for integer-valued $N$.
In Theorem~\ref{thm:MinimumExists}, we show that for $1\leq h\leq N$ a unique minimum for each mode $\lambda_i\in(1,2]$. Since a unique minimum exists, the optimal sampling period ratio $N\in\mathbb{N}_{\geq 1}$ is either $\left\lceil\frac{\log(g_1)}{\log(1-\epsilon)}\right\rceil$ or $\left\lfloor\frac{\log(g_1)}{\log(1-\epsilon)}\right\rfloor$ depending which one leads to a lower value for the largest absolute value of the roots, where $g_1$ is the smallest root of \eqref{eq:QuadraticEquationForOptimum}.
Finally, Theorem~\ref{thm:FiniteMinimizer} states that under a certain condition the optimization problem has a finite real-valued optimizer. This implies that a finite integer-valued optimizer exists as well.
Hence, the results obtained in this section still hold under Assumption~\ref{assum:SamplingPeriodRatio} and Assumption~\ref{assum:Delay}.

\section{Numerical Examples}
\label{sec:NumericalExamples}
In this section, we will numerically evaluate the theoretical results obtained in the previous sections. We look at a graph with ${n=6}$ agents and the following initial condition and adjacency matrix 
\begin{align}
    \label{eq:InitialState}
     x(0)=\begin{bmatrix}5 \\ 6 \\ -3.5 \\ 0 \\ -2 \\ 3\end{bmatrix}\ \text{and}\ 
     A=\begin{bmatrix}
        0& 1& 0& 0& 1& 0\\
        1& 0& 1& 0& 1& 0\\
        0& 1& 0& 1& 0& 0\\
        0& 0& 1& 0& 1& 1\\
        1& 1& 0& 1& 0& 0\\
        0& 0& 0& 1& 0& 0
    \end{bmatrix}.
\end{align}

The controller is set to ${\epsilon=0.3}$ and we assume a delay of $h=10$.
For this graph the eigenvalues of the normalized Laplacian are given by ${\lambda_i\in \lbrace 0, 0.446, 0.871, 1.284, 1.521, 1.877\rbrace}$.
Based on the eigenvalues, Theorem~\ref{thm:FiniteMinimizer} shows us that a finite minimizer of \eqref{eq:OptimizationObjective}  exists.

The upper plot of Figure~\ref{fig:OptimizationOtherGraph} shows ${\bar{z}_i(N)}$ for the different modes $\lambda_i$ and $i>0$.
Note that for all modes with $\lambda_i\leq 1$, $\bar{z}_i(N)$ decreases with an increasing sampling period ratio $N$ while for all modes with $\lambda_i\in(1,2]$ a unique minimum for each $\bar{z}_i(N)$ exists. 
Both these observations verify our theoretical results of Theorem~\ref{thm:DecreasingValue} and Theorem~\ref{thm:MinimumExists}.
The lower plot of Figure~\ref{fig:OptimizationOtherGraph} shows the function $\max\left( \bar{z}_1(N),\ldots,\bar{z}_{n-1}(N)\right) $ and we see that the optimal sampling period ratio is given by $N^*=16$ for $N\geq h$. 
Furthermore, we observe that $N^*$ is the global optimum over all $N\in[1,50]$.
\begin{figure}
    \centering
    \begin{minipage}{0.45\textwidth}
        \centering
        \includegraphics[width=1\textwidth]{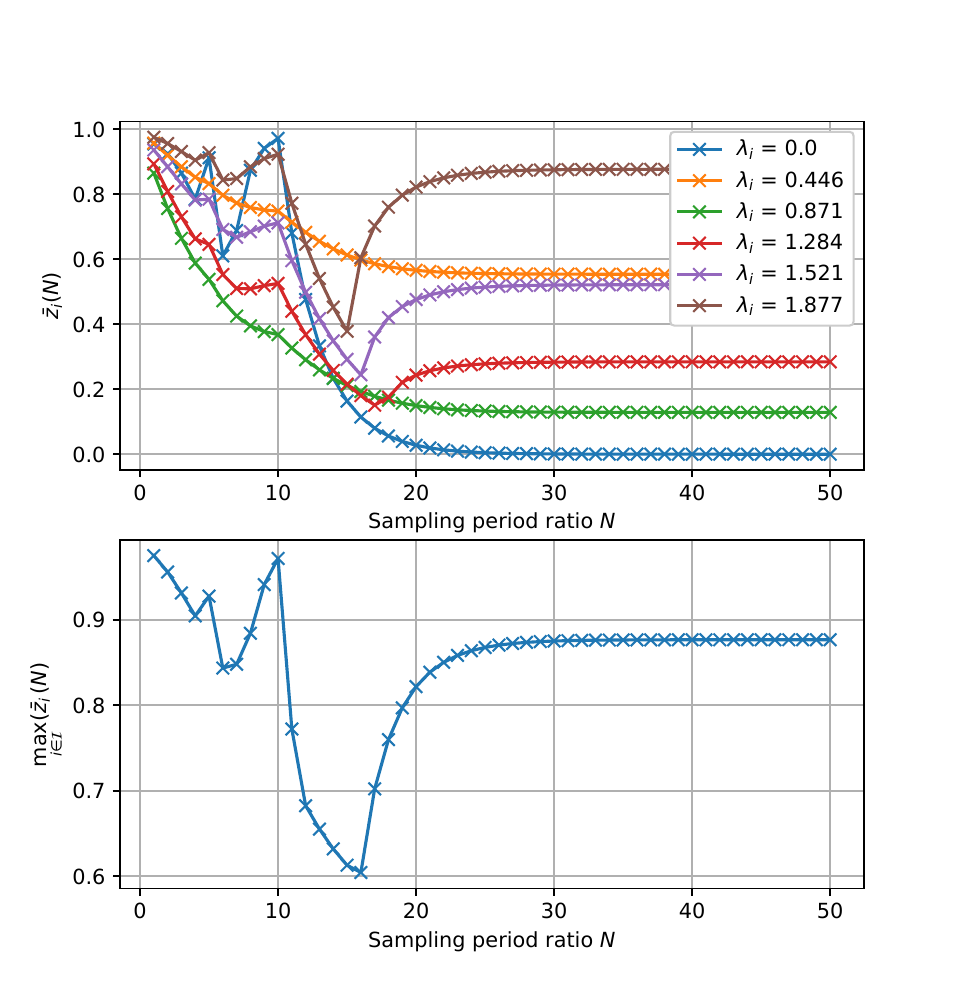} 
        \caption{The upper plot shows the largest absolute value of the roots $\bar{z}_i(N)$ corresponding to the modes $\lambda_i$ of the investigated graph for different $N$. The lower plot shows the objective \eqref{eq:OptimizationObjective} over $N$.}
        \label{fig:OptimizationOtherGraph}
    \end{minipage}\hfill
    \begin{minipage}{0.45\textwidth}
        \centering
        \includegraphics[width=\textwidth]{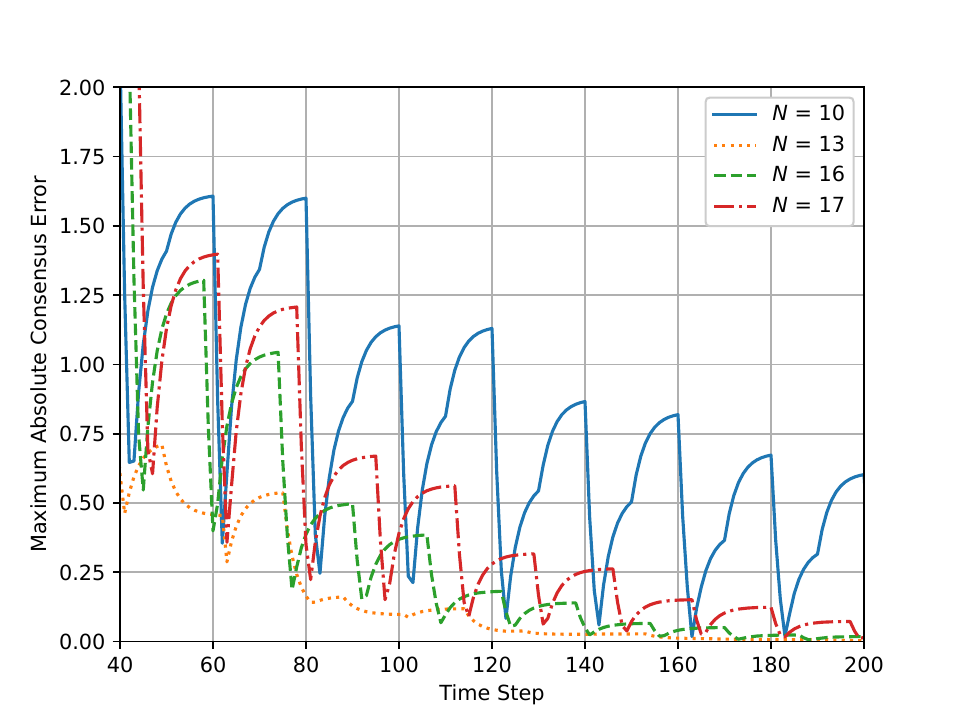} 
        \caption{This plot shows the maximum absolute error from the consensus point over time for different sampling period ratios ${N\in\lbrace 10, 13, 16, 17 \rbrace}$.}
        \label{fig:MaximumErrorTrajectoryOtherGraph}
    \end{minipage}
\end{figure}
Figure~\ref{fig:MaximumErrorTrajectoryOtherGraph} shows the trajectory for the maximum absolute error from the consensus point for different sampling period ratios.
The error converges faster for $N^*=16$ than for $N\in\lbrace10, 17\rbrace$, but it converges slower than for $N=13$.

The results above show that by solving the optimization problem \eqref{eq:OptimizationObjective} we obtain a sampling period ratio $N^*$ which improves the performance over setting $N=h$. However, it is not the optimal $N$, since there are values in $[h,N^*]$ for which the agents converged faster.
This is an expected result because we used the dominant root of the characteristic equation to investigate the convergence speed, which is an approximation of the convergence speed.

In the following, we look at the convergence speed of the agents for $\epsilon=0.1+0.1j$, ${j\in\lbrace0, \ldots, 8\rbrace}$, and a delay for $h=10$, and compare the convergence speed when using $N^*$ to the convergence speed of the optimal $N$, which will be determined empirically.
To determine the sampling period ratio that leads to the fastest speed of convergence we compute the sampling period ratio for which the solution of
\begin{align}
    \min_{k\in \mathbb{N}_{\geq 0}} k\ \mathrm{s.t.}\ \max_{i\in\lbrace 1,\cdots,n\rbrace} x_i(j)- \min_{i\in\lbrace 1,\cdots,n\rbrace} x_i(j)\leq \delta\ \mathrm{for}\ j\geq k
\end{align}
is the smallest, where we used $\delta=10^{-5}$.
Here, $N_{opt}$ and $N_{opt}^{N\geq h}$ denote the optimal sampling period ratio for all ${N\in \mathbb{N}_{\geq 1}}$ and $N\in \mathbb{N}_{\geq h}$, respectively.
The results are shown in Table~\ref{tab:OptimalSamplingPeriodRatios}.
As expected the larger $\epsilon$ is the closer the optimal solution $N^*$ of the optimization problem \eqref{eq:OptimizationObjective} is to $N_{opt}^{N\geq h}$. Furthermore, we also see that the optimal value $N_{opt}$ over all $N$ is the same as $N^*$ for larger $\epsilon$.
Furthermore, we see that $N_{opt}^{N\geq h}\leq N^*$, which might hint at $N^*$ being an upper bound on the true optimal value.
Interestingly, for $\epsilon\in\lbrace 0.1, 0.2 \rbrace$ it is optimal to send state measurements at each time step.
\begin{table}[]
    \centering
    \begin{tabular}{c|c|c|c|c|c|c|c|c|c}
         $\epsilon$ & $0.1$ & $0.2$ & $0.3$ & $0.4$ & $0.5$ & $0.6$ & $0.7$ & $0.8$ & $0.9$ \\
         \hline
         $N^*$ & 29 & 19 & 16 & 14 & 13 & 12 & 11 & 11 & 11\\
         \hline
         $N_{opt}^{h\leq N}$ & 10 & 13 & 14 & 14 & 12 & 12 & 11 & 11 & 11\\
         \hline
         $N_{opt}$ & 1 & 1 & 14 & 14 & 12 & 12 & 11 & 11 & 11\\
         \hline
         
    \end{tabular}
    \caption{In this table, we compare the finite solution $N^*$ of the optimization problem \eqref{eq:OptimizationObjective} to the true optimal sampling period ratio $N_{opt}$ and the optimal sampling period ratio $N_{opt}^{h\leq N}$ for $N\geq h$ for different $\epsilon$ and $h=10$.}
    \label{tab:OptimalSamplingPeriodRatios}
\end{table}

Next, we examine the behaviour of the dual-rate consensus problem for a graph where no finite optimizer exists according to Theorem~\ref{thm:FiniteMinimizer}. Here, we look at a graph with adjacency matrix
\begin{align}
     A=\begin{bmatrix}
        0 & 1 & 0 & 1 & 1 & 0\\
        1 & 0 & 1 & 0 & 1 & 1\\
        0 & 1 & 0 & 0 & 0 & 1\\
        1 & 0 & 0 & 0 & 1 & 0\\
        1 & 1 & 0 & 1 & 0 & 0\\
        0 & 1 & 1 & 0 & 0 & 0
    \end{bmatrix},
\end{align}
which leads to $|1-\lambda_1|=0.658>|1-\lambda_{n-1}|=0.621$. Hence, the condition given in Theorem~\ref{thm:FiniteMinimizer} is not fulfilled and the optimization problem \eqref{eq:OptimizationObjective} would tell us that we should never communicate. 

For the simulations, we use $\epsilon=0.1+0.1j$, $j\in\lbrace0, \ldots, 8\rbrace$, $h=10$, and the same initial state as in \eqref{eq:InitialState}.
We will determine the optimal period ratios empirically as done in Table~\ref{tab:OptimalSamplingPeriodRatios}.

Interestingly, for all investigated $\epsilon$ the optimal sampling period ratio shows that the agents should send the measurements at the same rate as the controller is updated, i.e. $N_{opt}=1$ 
This opens the question if single-rate consensus is always a better option than dual-rate consensus for graphs, for which $|1-\lambda_1|>|1-\lambda_{n-1}|$ holds.
\section{Conclusion}
\label{sec:Conclusion}
In this work, we investigated the dual-rate consensus problem for single-integrator agents, where the state measurements from neighbouring agents are sent with a lower rate than the rate the controller signals are updated with. In addition to the lower rate for the measurements the measurements are subject to a constant delay as they are sent between the agents.

We showed that as long as the graph representing the connections between agents is connected, a consensus point is reached regardless of the value of the delay and the rate with which the measurements are sent.
Inspired by the dominant pole approximation approach, we examined the convergence speed of the modes when the sampling period ratio is larger than the delay. We showed for certain modes not communicating leads to the fastest convergence, measured by the largest absolute value of the roots related to the mode, while for other modes there exists a unique sampling period ratio for each mode that improves the convergence.
With these results, we formulated an optimization problem to find a sampling period ratio to send the state measurements with and derived a necessary and sufficient condition for a finite minimizer of this problem to exists.
Our numerical examples verified the theoretical results and we also showed that the larger the controller gain the more likely it is that our dominant pole approximation approach holds.
In addition to that, for graphs that do not fulfill the condition of Theorem~\ref{thm:FiniteMinimizer} using a single-rate consensus approach seems to be the best choice according to our simulation results.

There are several avenues for future work. 
First,  our results only looked at the case where the sampling period ratio is larger than the delay. Therefore, it would be interesting to investigate the case where the sampling period is smaller than the delay.
Second, according to our simulations it seems like the optimization problem we pose to find the sampling period ratio leads to an upper bound on the optimal rate. 
Looking deeper into this would also benefit in limiting the search space for finding the optimal sampling period ratio.
Third, we investigated homogeneous single-integrator system such that the investigation of systems with heterogeneous agents and individual sampling period ratios could lead to compelling results on how to choose the sampling period ratios. Fourth, looking at time-varying delay and packet drops would also bring the work closer to real-world communication systems.
\appendix
\subsection{Proof of Theorem~\ref{thm:DecreasingValue}}
\label{app:ProofOfDecreasingValueTheorem}
We begin by simplifying the notation. We introduce ${b_i=1-\lambda_i}$, $c=(1-\epsilon)^{-h}$ and $g=(1-\epsilon)^N$, which leads to the roots for $P_i(z)$,
\begin{align}
    z_{1/2}(g)=p_1(g)\pm\sqrt{p_1(g)^2+p_2(g)},
\end{align}
where
\begin{align}
    p_1(g)=\frac{(1-b_ic)g+b_i}{2}\quad \mathrm{and}\quad p_2(g)=b_ig(c-1).
\end{align}
For the case $\lambda_i\in(0,1]$, the largest absolute value of the roots of $P_i(z)$ is given by
\begin{align}
    \bar{z}_i(g)=p_1(g)+\sqrt{p_1(g)^2+p_2(g)},
\end{align}
which abuses notation slightly.
With the simplified notation, we would like to investigate how $\bar{z}_i(g)$ changes with $g$. 

To do so, we look at the derivative of $\bar{z}_i(g)$, which is
\begin{align}
    \frac{d}{dg}\bar{z}_i(g)=\frac{d}{dg}p_1(g)+\frac{d}{dg}\sqrt{p_1(g)^2+p_2(g)}\\
    =\frac{1}{2\sqrt{p_1(g)^2+p_2(g)}}\left((1-b_ic)\bar{z}_i(g)+b_i(c-1)\right)
\end{align}
Next let us investigate when the derivative is greater than zero.
\begin{align}
    \frac{d}{dg}\bar{z}_i(g)&> 0\quad \Leftrightarrow\quad (1-b_ic)\bar{z}_i(g)> b_i(1-c)
\end{align}
Note now that $b_i(1-c)\leq 0$, since $c=(1-\epsilon)^{-h}>1$, $b_i=1-\lambda_i\geq 0$, and $\lambda_i\in(0,1]$. So if $1-b_ic\geq 0$, this condition is always fulfilled, because $\bar{z}_i(g)\geq 0$.
Hence, we will focus now on the case $1-b_ic<0$, which gives us
\begin{align}
    \label{eq:ConditionDecreasingValueOfRoots}
    \bar{z}_i(g)< \frac{b_i-b_ic}{1-b_ic}=1+\frac{b_i-1}{1-b_ic}.
\end{align}
Recall that $b_i-1< 0$ and by assumption $1-b_ic<0$. Therefore, the second summand on the right side of the last equality is positive. 
Since we know that $\bar{z}_i(g)<1$ for ${\lambda_i\in(0,1]}$ (see Lemma~\ref{lem:AbsoluteValuesOfRoots}), \eqref{eq:ConditionDecreasingValueOfRoots} holds. Hence, $\bar{z}_i(g)$ is increasing in $g$, which means it is decreasing in $N$ and this proves Theorem~\ref{thm:DecreasingValue}.

\subsection{Proof of Lemma~\ref{lem:LargestAbsoluteValueOfZeroAtSmallestEigenvalue}}
\label{app:ProofRootIncreasingWithEigenvalue}

Recall from the proof of Theorem~\ref{thm:DecreasingValue} that the largest absolute values are given by
\begin{align}
    \bar{z}_i(g)=p_1^i(g)+\sqrt{p_1^i(g)^2+p_2^i(g)},
\end{align}
where $i$ refers to the $i$th eigenvalue, $\lambda_i$.
Now if we can show that $p_1^i(g)\geq p_1^j(g)$ and $p_2^i(g)\geq p_2^j(g)$ for all $g$ then ${\bar{z}_i(g)\geq\bar{z}_j(g)}$ for all $g$ holds as well.
Let us start with $p_1^i(g)\geq p_1^j(g)$,
\begin{align}
    \frac{(1-b_ic)g+b_i}{2}&\geq \frac{(1-b_jc)g+b_j}{2},\\
    \Leftrightarrow (b_i-b_j)(1-cg)&\geq 0,\\ 
    \Leftrightarrow 1&\geq cg,
\end{align}
where we used that $b_i-b_j>0$ whenever $\lambda_i<\lambda_j$. Since $h\leq N$, we have $cg=(1-\epsilon)^{N-h}<1$, which shows us that the last inequality holds.
Next, we look at $p_2^i(g)\geq p_2^j(g)$,
\begin{align}
    b_ig(c-1) \geq b_jg(c-1) \Leftrightarrow b_i \geq b_j \Leftrightarrow \lambda_j \geq \lambda_i,
\end{align}
where we used that $c>1$.
This concludes the proof.

\subsection{Proof of Theorem~\ref{thm:MinimumExists}}
\label{app:ProofOfTheoremAboutMinimumExistsForModes}

Similar to the proof of Theorem~\ref{thm:DecreasingValue}, we begin simplifying the notation by introducing $g=(1-\epsilon)^N$, $b_i=|1-\lambda_i|$, and $c=(1-\epsilon)^{-h}$, which leads to
\begin{align}
    P_i(z)=z^2-2p_1(g)z+p_2(g),
\end{align}
where
\begin{align}
    p_1(g)=\frac{(1+b_ic)g-b_i}{2}\quad \mathrm{and}\quad p_2(g)=b_ig(c-1).
\end{align}
The roots of the polynomial are then given by
\begin{align}
    z_{1/2}(g)=p_1(g)\pm\sqrt{p_1(g)^2-p_2(g)}
\end{align}

Recall that $g\in(0,\frac{1}{c}]$, since $N\geq h$.
This shows us that $p_1(\frac{1}{c})=\frac{1}{2c}\geq 0$ and $\lim_{g\rightarrow 0 }p_1(g)=-\frac{b_i}{2}$. Hence, we observe that $p_1(g)$ has a sign change as $g$ decreases, which indicates that the root with the largest absolute value does not have a simple solution as in the case $\lambda_i\in(0,1]$, where we could simply use the positive sign in front of the square root for the largest absolute root.

Next, we will check $p_1(g)^2-p_2(g)$, which is a continuous function in $g$, to see if we could have complex-valued roots.
Let us define $g_0=\frac{b_i}{1+b_ic}$ such that $p_1(g_0)=0$.
Then we know that $p_1^2(g_0)-p_2(g_0)=-p_2(g_0)<0$.
Furthermore, $p_1(0)^2-p_2(0)=\frac{b^2}{4}$.
This show us that $p_1(g)^2-p_2(g)$ has one zero $g_1\in[0,g_0]$ and another zero $g_2 \in[g_0,\infty)$. This indicates that complex-valued roots are possible.
Since we are only interested in value of $g\in[0,\frac{1}{c}]$, let us now consider two different cases
\begin{itemize}
    \item \textbf{Case 1:} Let $b_i$ and $c$ be such that $g_2\geq \frac{1}{c}$ holds:
    
    In this case, we have $p_1(\frac{1}{c})^2-p_2(\frac{1}{c})<0$. Thus, the roots of $P_i(z)$ are complex-conjugated and their absolute value is given by
    \begin{align}
        \bar{z}_i(g)=\sqrt{p_2(g)}=\sqrt{b_ig(c-1)},
    \end{align}
    on the interval $[g_1,\frac{1}{c}]$, which is increasing in $g$. Hence, the minimum value of $\bar{z}_i(g)$ is located at $g_1$ for ${g\in[g_1, \frac{1}{c}]}$.
    
    For $g\in[0,g_1)$, we observe that $p_1(g)<0$ and ${p_1(g)^2-p_2(g)>0}$, which leads to 
    \begin{align}
        \bar{z}_i(g)=\max_i z_i(g) = -p_1(g)+\sqrt{p_1(g)^2-p_2(g)}.
    \end{align}
    To see if $\bar{z}_i(g)$ is increasing for $g\in[0,g_1)$ we look at its derivative. 
    The derivative of $\bar{z}_i(g)$ is given by
    \begin{align}
        \frac{d}{dg}\bar{z}_i(g) = -\frac{d}{dg}p_1(g)+\frac{d}{dg}\sqrt{p_1(g)^2-p_2(g)}\\
        =\frac{1}{2\sqrt{p_1(g)^2-p_2(g)}}\left(-(1+b_ic)\bar{z}_i(g)-b_i(c-1)\right).
    \end{align}
    Next, we want to determine under which conditions the derivative is positive.
    \begin{align}
        \frac{d}{dg}\bar{z}_i(g)\geq 0 \Leftrightarrow \bar{z}_i(g)\leq -\frac{b_i(c-1)}{1+b_ic}
    \end{align}
    Since $\bar{z}_i(g)\geq 0$ and the right-hand side of the last inequality is negative, we see that this inequality will never hold.
    Thus, the maximum absolute value is decreasing in $g$ for $g\in [0,g_1)$, which means it is increasing in $N$.

    Hence, in this case, the minimum value of $\bar{z}_i(g)$ on ${g\in[0,\frac{1}{c}]}$ is achieved at $g_1$.

\item \textbf{Case 2:} Let $b_i$ and $c$ be such that $g_2< \frac{1}{c}$ holds:

    In this case, we have $p_1(g)>0$ and $p_1(g)^2-p_2(g)\geq 0$ for $g\in[g_2,\frac{1}{c}]$ such that the largest absolute value of the roots for $g\in[g_2,\frac{1}{c}]$ is given by 
    \begin{align}
        \bar{z}_i(g) = p_1(g)+\sqrt{p_1(g)^2-p_2(g)}.
    \end{align}
    First, note that $\bar{z}_i(g_2)=p_1(g_2)$ and next we want to compare $\bar{z}_i(g_2+\beta)$ with $\bar{z}_i(g_2)$ to see for which ${\beta\in (0,\frac{1}{c}-g_2]}$ the inequality $\bar{z}_i(g_2+\beta)\geq \bar{z}_i(g_2)$ holds.
    \begin{align}
        \bar{z}_i(g_2+\beta)&=p_1(g_2+\beta)+\sqrt{p_1(g_2+\beta)^2-p_2(g_2+\beta)} \\
        &\geq p_1(g_2+\beta)=p_1(g_2)+\frac{1+b_ic}{2}\beta\\
        &>p_1(g_2)=\bar{z}_i(g_2),
    \end{align}
    where for the first inequality we used that $\sqrt{p_1(g)^2-p_2(g)}> 0$ and for the last inequality we used that $\beta>0$.
    
    This shows us that the minimum of $\bar{z}_i(g)$ on the interval $[g_2,\frac{1}{c}]$ is achieved at $g_2$. However, from Case 1 we know that $\bar{z}_i(g)$ will further decrease on the interval $[g_1,g_2]$ with the minimum being at $g_1$ before it increases again on the interval $[0,g_1)$. 
    Hence, the minimum of $\bar{z}_i(g)$ on the interval $[0,\frac{1}{c}]$ is achieved at $g_1$.
\end{itemize}
    
In both cases, a unique local minimum exists at $g_1$, where $g_1$ is the smallest root of \eqref{eq:QuadraticEquationForOptimum}. This leads to the minimum being at sampling period ratio $N=\frac{\log(g_1)}{\log(1-\epsilon)}$. Thus, we have proved Theorem~\ref{thm:MinimumExists}.

\subsection{Proof of Theorem~\ref{thm:FiniteMinimizer}}
\label{app:ProofOfFiniteMinimizer}

Recall that from Proposition~\ref{prop:ConvergenceForLambda0} and Lemma~\ref{lem:ConvergenceOfAbsoluteValueOfRoots} we get $\lim_{N\rightarrow\infty}\bar{z}_i(N)=0$ and $\lim_{N\rightarrow\infty}\bar{z}_i(N)=|1-\lambda_i|$ for $i\neq 0$, respectively.
Thus, the objective \eqref{eq:OptimizationObjective} will converge to either $\bar{z}_1(N)$ or $\bar{z}_{n-1}(N)$ for large $N$.

First, let us assume that $|1-\lambda_1|>|1-\lambda_{n-1}|$.
Under this assumption, we recall first that $\bar{z}_1(N)$ is decreasing in $N$ (see Theorem~\ref{thm:DecreasingValue}). 
Hence, the smallest value of $\bar{z}_1(N)$ is achieved when $N\rightarrow \infty$. Furthermore, we know that \eqref{eq:OptimizationObjective} converges to $1-\lambda_1$.
So even if $\bar{z}_i(N)>\bar{z}_1(N)$ for some $N\geq h$ and $i\in \lbrace 0\rbrace \cup \mathcal{I}_{>1}$ the smallest value of the objective is achieved for $N^*=\infty$.
Hence, if $|1-\lambda_1|>|1-\lambda_{n-1}|$ holds the minimizer of the objective \eqref{eq:OptimizationObjective} is non-finite. 

Next, let us assume ${|1-\lambda_1|=|1-\lambda_{n-1}|}$.
In case ${\lambda_1\neq \lambda_{n-1}}$, we can have the same argument as in the case ${|1-\lambda_1|>|1-\lambda_{n-1}|}$ due to the decreasing nature of $\bar{z}_1(N)$, such that the optimizer $N^*$ is not finite.
In case $\lambda_1 = \lambda_{n-1}$, we know that $\lambda_1>1$ since $\lambda_{n-1}\geq \frac{n}{n-1}>1$ (see \cite{SpectralGraphTheory}). 
Hence, all modes are the same and larger than 1 and we know that a unique minimum exists for $\lambda_i\in(1,2]$ according to Theorem~\ref{thm:MinimumExists}, such that the optimizer $N^*$ is finite.

Finally, we assume $|1-\lambda_1|<|1-\lambda_{n-1}|$.
If ${\bar{z}_{n-1}(N)\geq \bar{z}_i(N)}$ for all $i$ and $N$ then we know that the minimizer $N^*$ exists based on Theorem~\ref{thm:MinimumExists}. 
If the condition is not true, we know that $\bar{z}_{n-1}(N)$ will approach its minimum as $N$ increases such that some other $\bar{z}_i(N)$ could be larger. 
However, once the minimum has been reached $\bar{z}_{n-1}(N)$ starts to grow again and converges to $|1-\lambda_{n-1}|$. 
Hence, there should be at least a point that is smaller than ${|1-\lambda_{n-1}|}$ in the function $\max_{i\in \mathcal{I}_{>1}}\left( \bar{z}_0(N), \bar{z}_1(N),\bar{z}_i(N)\right)$ for a certain $N<\infty$. 
This shows us that the minimal value of $\max_{i\in \mathcal{I}_{>1}}\left( \bar{z}_0(N), \bar{z}_1(N),\bar{z}_i(N)\right)$ exists for a finite $N$.



\bibliographystyle{IEEEtran}
\bibliography{biblio}

\end{document}